% \documentclass[a4paper,onecolumn,11pt]{quantumarticle}
% \pdfoutput=1
% \documentclass[showpacs,amsmath,amssymb,twocolumn,prx,superscriptaddress]{revtex4-1}
\documentclass[prx,twocolumn,english,superscriptaddress,floatfix,longbibliography]{revtex4-2}

\usepackage[utf8]{inputenc}
\usepackage{braket, amsmath, graphicx, amsfonts, amsthm, algorithm, algpseudocode, subcaption}
\usepackage{array}
\usepackage{tikz}
\usetikzlibrary{quotes}
\usepackage[breaklinks=true]{hyperref}
\usepackage{xcolor}

\usepackage[compat=newest]{yquant}
\useyquantlanguage{groups}
\yquantset{
   operators/every box/.append style={}, % rounded corners
   operators/every control box/.style={/yquant/operators/every rectangular box},
}
\makeatletter
\yquant@langhelper@declare@command@uncontrolled%
   {controlbox}%
   \yquant@register@get@allowmultitrue%
   {%
      \expandafter\yquant@prepare@injection%
         \expandafter{\yquant@lang@attr@value}%
         {/yquant/operators/every control box}%
   }
\makeatother

\newtheorem{definition}{Definition}
\newtheorem{theorem}{Theorem}
\newtheorem{lemma}{Lemma}
\newtheorem{corollary}{Corollary}

\newcommand{\paulinorm}[1]{\left\|#1\right\|_\mathrm{pauli}}

\newcommand{\diamondnorm}[1]{\left\|#1\right\|_\diamond}
\newcommand{\tracenorm}[1]{\left\|#1\right\|_1}
\newcommand{\opnorm}[1]{\left\|#1\right\|}

\newcommand{\mC}{\mathcal{C}}

\newcommand{\mE}{\mathcal{E}}
\newcommand{\mF}{\mathcal{F}}
\newcommand{\mG}{\mathcal{G}}

\newcommand{\mL}{\mathcal{L}}
\newcommand{\mM}{\mathcal{M}}
\newcommand{\mN}{\mathcal{N}}

\begin{document}

\title{Quantum-Trajectory-Inspired Lindbladian Simulation}

\author{Sirui Peng}
\affiliation{State Key Lab of Processors, Institute of Computing Technology, Chinese Academy of Sciences, 100190, Beijing, China.}
\affiliation{School of Computer Science and Technology, University of Chinese Academy of Sciences, Beijing, 100049,  China.}

\author{Xiaoming Sun}
\affiliation{State Key Lab of Processors, Institute of Computing Technology, Chinese Academy of Sciences, 100190, Beijing, China.}

\author{Qi Zhao}
\affiliation{QICI Quantum Information and Computation Initiative, School of Computing and Data Science,
The University of Hong Kong, Pokfulam Road, Hong Kong SAR, China}

\author{Hongyi Zhou}%
\email{zhouhongyi@ict.ac.cn}
\affiliation{State Key Lab of Processors, Institute of Computing Technology, Chinese Academy of Sciences, 100190, Beijing, China.}

\date{\today}

\begin{abstract}
Simulating the dynamics of open quantum systems is a crucial task in quantum computing, offering wide-ranging applications but remaining computationally challenging. In this paper, we propose two quantum algorithms for simulating the dynamics of open quantum systems governed by Lindbladians. We introduce a new approximation channel for short-time evolution, inspired by the quantum trajectory method, which underpins the efficiency of our algorithms. The first algorithm achieves a gate complexity independent of the number of jump operators, $m$, marking a significant improvement in efficiency. The second algorithm achieves near-optimal dependence on the evolution time $t$ and precision $\epsilon$ and introduces only an additional $\tilde{O}(m)$ factor, which strictly improves upon state-of-the-art gate-based quantum algorithm that has an $\tilde O(m^2)$ factor.
The improvement stems from the integration of the new approximation channel with a novel structured linear combination of unitaries method.
In both our algorithms, the reduction of dependence on $m$ significantly enhances the efficiency of simulating practical dissipative processes characterized by a large number of jump operators.

% These algorithms pave the way for efficient simulation of open quantum systems on quantum computers, enabling a deeper understanding and exploitation of their rich dynamics. \sirui{proofreading}
% \qi{From the intro and abstract, I don't know what is the meaning of "Trajectory" here, is this the point we want to emphasize? If so, we need to elaborate on it. Do we need to emphasize that our method is easy to implement?}
% \sirui{"Quantum Trajectory Method" is introduced in Section 3 "New Approximation for $e^{\mathcal{L}\delta}$", and it is the inspiration of Algorithm \ref{alg:lind-sim}. The point we want to emphasize is the improvement on $m$-dependence, where $m$ is the number of jump operators. Our method is easier than \cite{LW23} and \cite{CW16}. However, it is much more difficult to implement compared to \cite{qDRIFT} since LCU and oblivious amplitude amplification are applied.}
\end{abstract}

% \keywords{\sirui{TODO}}

\maketitle

\section{Introduction}\label{sec:intro}

Simulating physical systems using quantum-mechanics-based computational devices \cite{Feynman1982SimulatingPW} is a foundational motivation for developing quantum computers. Traditionally, quantum algorithms have primarily focused on the Hamiltonian dynamics of closed systems, assuming no interaction with the environment. In contrast, the paradigm of open quantum systems incorporates environmental effects, offering a more realistic depiction of real-world systems, albeit at the cost of increased computational complexity. 
The simulation of open systems has crucial applications in various fields \cite{application-qchem-1,application-qchem-2,application-qphys-1,app-1,app-2,app-3,chen2024quantum}, such as quantum sensing \cite{app-2,app-3}, quantum chemistry \cite{application-qchem-1,application-qchem-2}, mathematical optimization \cite{chen2024quantum}. In addition, open-system simulation enables Hamiltonian eigenstate simulation \cite{prr.6.033147,zhan2025rapidquantumgroundstate}, including ground state preparation, which is a central and long-standing challenge in quantum many-body physics.

A widely studied case is the Markovian open quantum system, which can describe many practical quantum systems through operators acting solely on the system itself.
The evolution of a Markovian open system is governed by the Lindblad master equation (LME):
% \scriptsize
\begin{equation}\label{eqn:lme}
\begin{aligned}    
&\frac{d \rho(t)}{d t}\\
=& -i[H, \rho(t)]+\sum_{\alpha=1}^m \left(L_\alpha \rho(t) L_\alpha^{\dagger}-\frac{1}{2}\left\{L_\alpha^{\dagger } L_\alpha, \rho(t)\right\}\right)\\
=&\mathcal{L}\rho(t),
\end{aligned}
\end{equation}
% \normalsize
% \qi{there is no  $\mathcal{L}$ in above equation} \sirui{the above equation updated} 
where $\mathcal{L}$ is the Lindbladian, $H$ is the system Hamiltonian, and $m$ jump operators ${L_1,\ldots,L_m}$ representing environmental effects. Our goal is to construct circuits that implement a channel $\mathcal{N}$ such that $\|\mathcal{N} - e^{\mathcal{L}t}\|_\diamond\leq\epsilon$.

There has been increasing interest in open quantum system simulation in recent years \cite{2c1-1,2c1-2,SHS+22,2c1-5,CW16,LW23,Pocrnic2023QuantumSO,Ding2023SimulatingOQ,3c1-1,PDWM-p1,PDWM-p2,3c1-2,3c1-3}.
% where techniques in Hamiltonian simulation have been extensively applied, such as the linear combination of unitaries \cite{Lloyd1996UniversalQS, Berry2007EfficientQA, BCCKS-PRL}, sparse Hamiltonian oracle \cite{BCCKS14, Childs2012HamiltonianSU}, and block-encoding \cite{Low2019hamiltonian}.
% The first quantum algorithm for the Lindblad master equation requires $O(t^2/\epsilon)$ queries to the Lindbladian \cite{KBG+11}, which were improved to $O(t^{1.5}/\sqrt\epsilon)$ \cite{CL16-QIC}.
The first efficient algorithm with exponentially small precision was proposed by Cleve and Wang \cite{CW16}, achieving a gate cost of $O(m^2 t\operatorname{poly}\log(m t/\epsilon))$.
% \qi{here readers will not know what is $m$ and $q$. } \sirui{$q$ removed, $m$ introduced}
Li and Wang \cite{LW23} further explore simulating Lindbladian using higher series expansion, and obtain an efficient algorithm with $O(mt\mathrm{poly}\log(mt/\epsilon))$ calls of the block-encoding circuit of the Lindbladian and $O(m^2 t\mathrm{poly}\log(mt/\epsilon))$ additional elementary gates.
There are also attempts to reduce Lindbladian simulation to Hamiltonian simulation \cite{Pocrnic2023QuantumSO, Ding2023SimulatingOQ}, but these attempts do not achieve the exponential precision dependence. Kato et. al.  \cite{3c1-1} focus on minimizing circuit depth and ancilla size via a random circuit compilation method, and an $m$-independent cost is achieved for the task of observable estimation.
Patel and Wilde \cite{PDWM-p1, PDWM-p2} consider the program state input model and analyze the sampling complexity. Other studies \cite{3c1-2,3c1-3,PZKX201710002} address the problem of open-system simulation in models other than the standard quantum circuit model.

For all methods discussed above, the gate complexity dependence on the number of jump operators $m$ is at least $\Omega(m)$, which can scale exponentially with the qubit number $n$, leading to a huge overhead.
% \qi{here we should mention their limitations} \sirui{updated}
% \qi{add a few sentences emphasizing $m$ is large in a widely-used models. }
% \qi{maybe we don't need to mention this. } \sirui{There could be better motivation for reducing $m$-dependence. Need a discussion on motivation here.}
% In practice, the dissipative evolution in an open system usually involves multi-qubit correlated processes and might involve jump operators that act on subsets of qubits. The number of such operators can, in principle, grow exponentially with the number of qubits.
%especially if the correlations extend to all qubits.
Consider an $n$-qubit system where dissipation occurs through collective lowering operations involving arbitrary subsets of qubits. Such a scenario is representative of systems where qubits are closely spaced and interact with the environment in complex, collective manners. In this context, the jump operators associated with the dissipation can be defined for every possible subset $S$ of qubits as
\begin{equation}
L_S = \sqrt{\gamma_S} \prod_{i \in S} \sigma_i^-,
\end{equation}
where $S$ is any non-empty subset of $\{1, 2, \ldots, n\}$, $\gamma_S$ is the collective decay rate for the subset $S$ and $\sigma_i^-=\ket{0}\bra{1}_i$ is the lowering operator for the $i$-th qubit.
In such physical systems, the number of jump operators can be as large as $O(4^n)$. Therefore, reducing $m$-dependence can provide exponential speedup for the simulation of such physical systems. Several proposals \cite{Borras2024AQA, Liu2024SimulationOO} aim to reduce $m$-dependence, restricted to specific classes of Lindbladians. In \cite{Borras2024AQA}, it is assumed that the short-time evolution of the Lindbladian can be well approximated by a mixed-unitary channel. In \cite{Liu2024SimulationOO}, the jump operators are limited to be tensor products of Pauli operators $P\in\{I,X,Y,Z\}^{\otimes n}$.

%Both algorithms fail to handle the example given above.
% Therefore, reducing gate cost dependence on the number of jump operators will be a major task for practically simulating large-scale open systems. %\sirui{motivation}. 

% \begin{figure*}[bhtp]
%     \centering
%     \includegraphics[width=0.6\textwidth]{qtm-schema-light.png}
%     \caption{In the quantum trajectory method, the state of the system is viewed as a mixture of pure states, and a trajectory is a sequence of pure states $\ket{\psi_0},\ket{\psi_1},\ldots,\ket{\psi_t}$, the system states under a possible list of individual evolution operators (Hamiltonian or jump operators). The mixture of all possible trajectories provides a good approximation of the evolution.}
%     \label{fig:qtm-schema-light}
% \end{figure*}

\begin{figure*}[bhtp]
    \centering
    \includegraphics[width=0.9\textwidth]{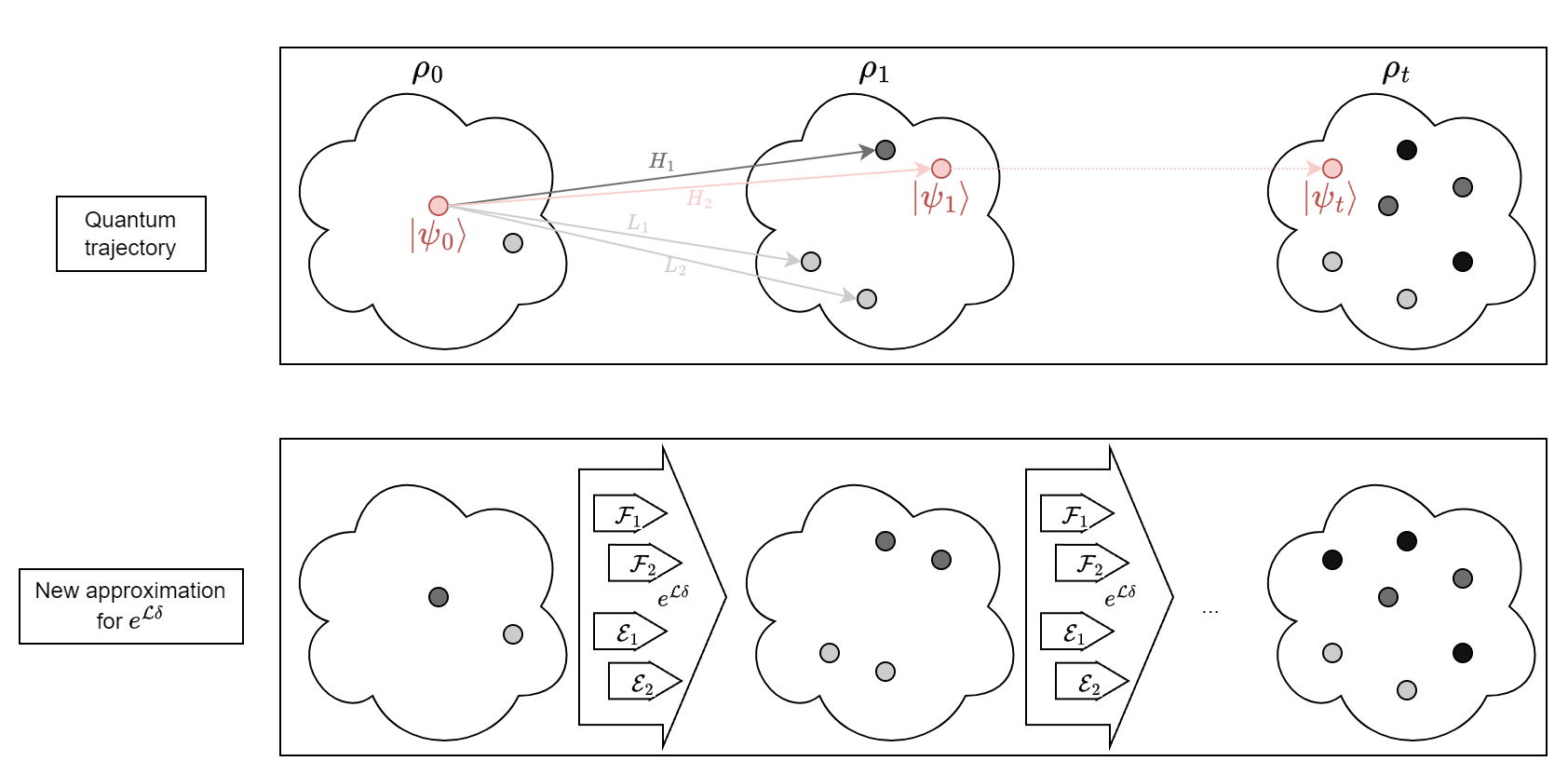}
    \caption{In quantum trajectory method, the state of the system is viewed as a mixture of pure states, and a trajectory is a sequence of pure states $\ket{\psi_0},\ket{\psi_1},\ldots,\ket{\psi_t}$, the system states under a possible list of individual evolution operators. In this figure, an example Lindbladian with $H=H_1+H_2$ and jump operators $\{L_1, L_2\}$ is presented. The mixture of all possible trajectories provides a good approximation of the evolution. Inspired by the quantum trajectory method, we introduce a new approximation for the short-time evolution of LME that is composed of efficiently implementable individual channels. In this figure, the individual channels $\mathcal{F}_1,\mathcal{F}_2$ correspond to individual Hamiltonians $H_1,H_2$, and $\mathcal{E}_1,\mathcal{E}_2$ correspond to jump operators $L_1,L_2$.}
    \label{fig:qtm-schema}
\end{figure*}

To solve the problem above, we propose two novel quantum algorithms for the simulation of general Markovian open quantum systems, greatly reducing the dependence on the number of jump operators. These algorithms are inspired by the quantum trajectory method in classical simulations of open systems (see Fig.~\ref{fig:qtm-schema}), which enables us to find a new approximation channel for short-time evolution. We propose a new technique for implementing quantum channels in the quantum circuit model, which we term "structured linear combination of unitaries". This technique enables the efficient construction of the new approximation channel. As a result, Algorithm \ref{alg:lind-sim} eliminates $m$-dependence in gate complexity through a simple sampling method.
Algorithm \ref{alg:lind-sim-encoded} further incorporates the fractional query method \cite{CGMSY,BCG14,Kothari-phd-thesis,CW16}, reducing $m$-dependence from $\tilde O(m^2)$ to $\tilde O(m)$ while maintaining near-optimal $(t,\epsilon)$-dependence of $O\left(t\frac{\log^2(t/\epsilon)}{\log\log(t/\epsilon)}\right)$. 
Our algorithms capture the classical randomness in the Lindblad master equation, allowing the simulation of Lindblad dynamics containing numerous jump operators. We demonstrate improved gate complexity in our algorithms through detailed resource estimation and validate their efficiency with numerical simulations of practical dissipative systems, which are highly consistent with exact solutions. By reducing the dependence on the number of jump operators, our algorithms address a critical bottleneck in the practical application of quantum computing to real-world problems characterized by complex dissipative dynamics.

\section{Main Results}
% \sirui{Add a figure introducing the main result's relation graph. Also, add a table for comparison of the SOTA result}
We first list our main results and leave the proof in the next sections.
We assume that operators within $\mathcal{L}$ are given as linear combinations of $q$ Pauli strings \cite{CW16},
\begin{equation}\label{eqn:lind-pauli}
\begin{aligned}
    H &= \sum_{l=1}^{q} T_{0l} V_{0l}, \\
    L_j &= \sum_{l=1}^{q} T_{jl} V_{jl}, j=1,\ldots, m,
\end{aligned}    
\end{equation}
where $T_{jl}\geq 0$ and $V_{jl}$ are Pauli strings. The norm depicting the intensity of $\mathcal{L}$ is defined as:
\begin{equation}\begin{aligned}
\|\mathcal{L}\|_{\mathrm{pauli}}   &= \|H\|_{\mathrm{pauli}} + \sum_{j=1}^{m} \|L_j\|_{\mathrm{pauli}}^2,
\end{aligned}\end{equation}
where $\|H\|_{\mathrm{pauli}} = \sum_{l=1}^{q} T_{0l}$ and $\|L_j\|_{\mathrm{pauli}} = \sum_{l=1}^{q} T_{jl}$ for $j=1,\ldots,m$. Also, we use $\|\cdot\|_1$ to denote Schatten 1-norm, $\|\cdot\|_2$ to denote Euclidean norm of a vector, and $\|\cdot\|$ to denote spectral norm (i.e. operator norm) for a matrix. %\qi{this denotation seems a little inconsistent. }

Our main contributions are summarized in the following theorems.
We first present a quantum algorithm for Lindbladian simulation removing the $m$-dependence.

\begin{theorem}\label{thm:t2}
Let $\mathcal{L}$ be a Lindbladian presented as a linear combination of $q$ Pauli strings. For any $t > 0$ and $\epsilon > 0$, there exist quantum circuits of gate count
\begin{equation*}
O\left(\frac{qn\tau^2}{\epsilon}\right)
\end{equation*}
such that sampling from these circuits implements a quantum channel $\mathcal{N}$, with $\|\mathcal{N} - e^{\mathcal{L}t}\|_\diamond\leq \epsilon$, where $\tau = t \|L\|_{\mathrm{pauli}}$.
\end{theorem}
Additionally, the $(t,\epsilon)$-dependence can improved to near-optimal at the cost of introducing an $\tilde O(m)$ factor. 
\begin{theorem}\label{thm:texp}
Let $\mathcal{L}$ be a Lindbladian presented as a linear combination of $q$ Pauli strings. Then, for any $t > 0$ and $\epsilon > 0$, there exists a quantum circuit of gate count
\begin{equation*}
O\left( mq\tau \frac{(\log (mq\tau/\epsilon) + n)\log (\tau/\epsilon)}{\log\log (\tau/\epsilon)}\right)
\end{equation*}
that implements a quantum channel $\mathcal{N}$, such that $\|\mathcal{N} - e^{\mathcal{L}t}\|_\diamond\leq \epsilon$, where $\tau = t \|L\|_{\mathrm{pauli}}$. %\qi{maybe $\tau$ is not a good one, in quantum simulation, people usually use $\tau$ to denote a short time period. maybeuse $\alpha$? }
\end{theorem}
The proof of Theorem \ref{thm:t2} is achieved through the construction of a gate-based quantum algorithm, namely Algorithm \ref{alg:lind-sim}. Algorithm \ref{alg:lind-sim} utilizes channel $\mE$ in Eq.~\eqref{eqn:the-channel-E} to decompose the short-time Lindblad evolution into a mixture of individual channels representing different operators (Hamiltonian or jump operators). These channels are then implemented via sampling different gadget circuits in Figures \ref{fig:pauli-gadget} and \ref{fig:jump-operator-gadget}. The proof of Theorem 2 is similarly achieved through the construction of the algorithm, namely Algorithm 2. Algorithm 2 further incorporates techniques from the fractional-query method to improve the $(t, \epsilon)$-dependence, at the cost of an additional m-dependence.
A comparison of our results and previous results can be found in Table \ref{tab:comparison-general}.

\begin{table}[bhtp]
    \centering
    \begin{ruledtabular}        
    \begin{tabular}{ccc}
        Algorithm & Gate count $(t,\epsilon)$ & Gate count $(m)$ \\
        \hline
        Algorithm \ref{alg:lind-sim} & $O\left(t^2/\epsilon\right)$ & $O(1)$ \\
        Algorithm \ref{alg:lind-sim-encoded} & $O\left(t \frac{(\log (t / \epsilon))^2}{\log \log (t / \epsilon)}\right)$ & $O\left(m \log m\right)$ \\
        Channel LCU \cite{CW16} & $O\left(t \frac{(\log (t / \epsilon))^2}{\log \log (t / \epsilon)}\right)$ & $O\left(m^2 \log m\right)$ \\
        \shortstack{Series expansion \cite{LW23}} & $O\left(\left(t \frac{\log (t / \epsilon)}{\log \log (t / \epsilon)}\right)^2\right)$ & $O(m^2\log^2 m)$ \\
        \shortstack{Hamiltonian reduction \\ (First-order) \cite{Ding2023SimulatingOQ}} & $O(t^2/\epsilon)$ & $O(m)$ \\
        \shortstack{Hamiltonian reduction \\ ($k$-order, $k>1$) \cite{Ding2023SimulatingOQ}} & $O\left(t \left(\frac{t}{\epsilon}\right)^{1/k}\right)$ & $O(m^{k+1})$ \\
        RI scheme \cite{Pocrnic2023QuantumSO} & $O(t^3/\epsilon)$ & $O(m^2)$ \\
    \end{tabular}
    \end{ruledtabular}
    \caption{Comparison between our results and previous results on general Lindbladian simulation. For query-based algorithm, we only consider their additional gate count. RI: Repeated Interaction. %Notice that results in \cite{Ding2023SimulatingOQ} and \cite{Pocrnic2023QuantumSO} have no explicit circuit construction, and the gate count estimation comes from their elementary action number, which can be a lower bound for their circuit size.
    }
    \label{tab:comparison-general}
\end{table}
% \qi{add a table to demonstrate the comparison between our results and the state-of-art previous results. It will be clearer. }

\section{New Approximation for $e^{\mathcal{L}\delta}$}\label{sec:new-approx-channel}

Our new approximate channel design is inspired by the quantum trajectory method \cite{trajectory-method-1,trajectory-method-2,trajectory-method-3}, an approach to simulating Markovian open quantum system numerically.
Recall that in Equation \eqref{eqn:lme}, the state of the system is represented by its density matrix $\rho = \sum_j p_j\ket{\psi_j}\bra{\psi_j}$, and its evolution is viewed as a deterministic transform from $\rho_0$ to $\rho_t$.
The quantum trajectory method instead has another perspective. The state of the system is viewed as a mixture of pure states $\{(p_j,\ket{\psi_j})\}_j$. Each pure state goes through a stochastic process where a random list of individual evolution operators (the Hamiltonian or a jump operator) is applied, and the state of the evolved system is the mixture of all possible result states, i.e. the average of all possible quantum trajectories.

More precisely, the following process shows how to calculate a single trajectory $\ket{\psi(t)},t\in[0,T]$ \cite{Daley14} via the first-order quantum trajectory method. This allows for a full simulation since the density matrix of the evolved system can be approximated by averaging over many trajectories.
\begin{enumerate}
    \item Sample the initial state $\ket{\psi(0)}$ form the initial density operator $\rho_0$. Set time $t=0$ and an adequate time step $\delta$.
    \item Compute evolved state under effective Hamiltonian $\ket{\phi_0}=(I-i\delta(H-\frac{i}{2}\sum_{j=1}^m L_j^\dagger L_j))\ket{\psi(t)}$, and its amplitude $\braket{\phi_0|\phi_0}=p_0$.
    \item Compute the evolved state under a single jump operator $\ket{\phi_j} = \sqrt{\delta} L_j\ket{\psi(t)}$, and its amplitude $\braket{\phi_j|\phi_j}=p_j$ for $j=1,\ldots,m$.
    \item  Sample a state from the following distribution
    \begin{equation*}
        \ket{\psi(t+\delta)} =\frac{\ket{\phi_j}}{\sqrt{p_j}},\text{ w.prob. }p_j,\text{ for }j=0,1,\ldots,m 
    \end{equation*}
    as the state of the next time step. On condition that $\delta$ is small enough, $\sum_{j=0}^m p_j\approx 1$.
    \item Repeat until $\ket{\psi(T)}$ is obtained.
\end{enumerate}
The key observation from the quantum trajectory method is to utilize \textit{classical randomness} and avoid related computational costs by sampling.

In the following lemma, we propose a new approximation channel $\mathcal{E}$ for short-time evolution $e^{\mathcal{L}\delta}$, inspired by the quantum trajectory method.
This channel $\mathcal{E}$ is the mixture of individual channels that represent the effects of Hamiltonian ($\mathcal{F}_1,\ldots,\mathcal{F}_q$) and jump operators ($\mathcal{E}_1,\ldots,\mathcal{E}_m$) respectively. To prove the lemma, we first use the fact that $I+\mL\delta$ approximates $e^{\mL\delta}$ up to error $(\delta\diamondnorm{\mL})^2$, and then we show $\mE$ approximates $I+\mL\delta$ up to error $(\delta\paulinorm{\mL})^2$.

\begin{lemma}[quantum trajectory method via quantum channel]
Let $\mathcal{L}$ be a Lindbladian presented as a linear combination of $q$ Pauli strings. Let $\lambda = \|\mathcal{L}\|_{\mathrm{pauli}}$ and $c_j = \|L_j\|_{\mathrm{pauli}} = \sum_{l=1}^{q} T_{jl}, j=1,\ldots, m$. Define channels
\begin{equation}\begin{aligned}
\mathcal{F}_l(\rho) = (I-i \lambda\delta V_{0l})\rho(I-i \lambda\delta V_{0l})^\dagger, \quad l=1,\ldots,q
\end{aligned}\end{equation}
and 
\begin{equation}\begin{aligned}
\mathcal{E}_j(\rho) = A_{j0}\rho A_{j0}^\dagger + A_{j1}\rho A_{j1}^\dagger,\quad j=1,\ldots, m
\end{aligned}\end{equation}
where
\begin{equation}\begin{aligned}
A_{j0} &= I - \frac{\lambda\delta}{2c_j^2}L_j^\dagger L_j, \\
A_{j1} &= \sqrt{\frac{\lambda\delta}{c_j^2}}L_j.
\end{aligned}\end{equation}
Let
\begin{equation}\label{eqn:the-channel-E}
\mathcal{E} = \sum_{l=1}^{q} \frac{T_{0l}}{\lambda}\mathcal{F}_l + \sum_{j=1}^{m} \frac{c_j^2}{\lambda}\mathcal{E}_j,    
\end{equation}
then
\begin{equation}\begin{aligned}
\|\mathcal{E} - e^{\mathcal{L}\delta}\|_\diamond \leq 5(\lambda\delta)^2.
\end{aligned}\end{equation}
\end{lemma}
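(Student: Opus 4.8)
The plan is to expand both $\mathcal{E}$ and $e^{\mathcal{L}\delta}$ to second order in $\delta$ and show the leading terms agree, then bound the diamond-norm tail. First I would compute $e^{\mathcal{L}\delta} = \mathrm{id} + \mathcal{L}\delta + O((\lambda\delta)^2)$, where the error term is controlled because $\|\mathcal{L}\|_{\diamond}$ is bounded by a constant multiple of $\lambda = \|\mathcal{L}\|_{\mathrm{pauli}}$ (each term $-i[H,\cdot]$, $L_\alpha\cdot L_\alpha^\dagger$, and $-\tfrac12\{L_\alpha^\dagger L_\alpha,\cdot\}$ has diamond norm at most a small multiple of $\|H\|_{\mathrm{pauli}}$ or $\|L_\alpha\|_{\mathrm{pauli}}^2$, which sum to $\lambda$). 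So the target is $\mathrm{id} + \mathcal{L}\delta$ up to $O((\lambda\delta)^2)$.

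Next I would expand $\mathcal{E}$ term by term. For $\mathcal{F}_l$: $(I - i\lambda\delta V_{0l})\rho(I + i\lambda\delta V_{0l}) = \rho - i\lambda\delta[V_{0l},\rho] + (\lambda\delta)^2 V_{0l}\rho V_{0l}$, so $\sum_l \frac{T_{0l}}{\lambda}\mathcal{F}_l(\rho) = \frac{\|H\|_{\mathrm{pauli}}}{\lambda}\rho - i\delta[H,\rho] + O((\lambda\delta)^2)$, using $\sum_l T_{0l} V_{0l} = H$ and $\sum_l T_{0l} = \|H\|_{\mathrm{pauli}}$. For $\mathcal{E}_j$: expanding $A_{j0}\rho A_{j0}^\dagger$ keeps $\rho - \frac{\lambda\delta}{2c_j^2}\{L_j^\dagger L_j,\rho\}$ plus an $O((\lambda\delta)^2)$ term, and $A_{j1}\rho A_{j1}^\dagger = \frac{\lambda\delta}{c_j^2} L_j\rho L_j^\dagger$; multiplying by the weight $c_j^2/\lambda$ gives $\frac{c_j^2}{\lambda}\rho + \delta(L_j\rho L_j^\dagger - \tfrac12\{L_j^\dagger L_j,\rho\}) + O((\lambda\delta)^2)$. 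Summing the identity-coefficients, $\frac{\|H\|_{\mathrm{pauli}}}{\lambda} + \sum_j \frac{c_j^2}{\lambda} = \frac{\|H\|_{\mathrm{pauli}} + \sum_j \|L_j\|_{\mathrm{pauli}}^2}{\lambda} = 1$, so the zeroth-order term is exactly $\rho$, the first-order term is exactly $\mathcal{L}\rho \cdot \delta$, and $\mathcal{E} = \mathrm{id} + \mathcal{L}\delta + O((\lambda\delta)^2)$ as desired.

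To turn this into the stated constant $5(\lambda\delta)^2$, I would keep careful track of the quadratic remainders rather than hiding them in $O(\cdot)$. The key facts are: $\|V_{0l}\|_{\mathrm{op}} = 1$ so the $\mathcal{F}_l$ remainder $(\lambda\delta)^2 V_{0l}\rho V_{0l}$ has diamond norm $(\lambda\delta)^2$; for $\mathcal{E}_j$, $\|L_j\|_{\mathrm{op}} \le \|L_j\|_{\mathrm{pauli}} = c_j$, so $\|L_j^\dagger L_j\|_{\mathrm{op}} \le c_j^2$ and the remainder in $A_{j0}\rho A_{j0}^\dagger$ (the single $(\frac{\lambda\delta}{2c_j^2})^2 (L_j^\dagger L_j)\rho(L_j^\dagger L_j)$ term, plus any cross terms that I'll have collected) is bounded by a constant times $(\lambda\delta)^2$ after weighting by $c_j^2/\lambda$ and summing, since $\sum_j c_j^2/\lambda \le 1$. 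Combining these with the quadratic term from $e^{\mathcal{L}\delta}$ (diamond norm at most $\tfrac12(\lambda\delta)^2 \cdot$const via $\|e^{\mathcal{L}\delta} - \mathrm{id} - \mathcal{L}\delta\|_\diamond \le \tfrac12 \|\mathcal{L}\|_\diamond^2\delta^2 e^{\|\mathcal{L}\|_\diamond\delta}$, or just the plain Taylor bound for $\lambda\delta \le 1$) gives a total of the form $C(\lambda\delta)^2$; I expect the bookkeeping to yield $C \le 5$. The main obstacle is precisely this constant-chasing: one must verify $\|\mathcal{L}\|_\diamond \le \lambda$ (or at worst $2\lambda$) cleanly, be careful that $A_{j0}$ is not exactly the effective-Hamiltonian evolution so $\mathcal{E}_j$ is only trace-preserving up to $O(\delta^2)$ (which is fine since $\mathcal{E}$ inherits this and the diamond-norm statement absorbs it), and make sure no hidden first-order term survives — the exact cancellation $\sum_l \frac{T_{0l}}{\lambda} + \sum_j \frac{c_j^2}{\lambda} = 1$ is what makes the whole construction work, so I would state that normalization identity explicitly before doing anything else.
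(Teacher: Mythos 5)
Your proposal is correct and follows essentially the same route as the paper: the paper also passes through the first-order truncation $I+\delta\mathcal{L}$ (citing the bound $\|(I+\delta\mathcal{L})-e^{\mathcal{L}\delta}\|_\diamond\leq(\delta\|\mathcal{L}\|_\diamond)^2\leq(2\lambda\delta)^2$), uses the same normalization identity to cancel the zeroth- and first-order terms of $\mathcal{E}$ exactly, and bounds the purely quadratic remainders by $(\lambda\delta)^2$ using $\|V_{0l}\|_{\mathrm{op}}=1$ and $\|L_j\|_{\mathrm{op}}\leq c_j$, giving $4+1=5$. Your constant-chasing resolves exactly as anticipated, so the only difference is presentational (the paper writes the expansion with an explicit reference system $\mathcal{K}$ for the diamond norm, which your direct diamond-norm bounds on the remainder maps accomplish equivalently).
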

\begin{proof}
We apply the result from \cite{CW16}
\begin{align}
\|(I+\delta\mathcal{L}) - e^{\mathcal{L}\delta}\|_\diamond \leq (\delta\|L\|_\diamond)^2
\leq (2\lambda\delta)^2.
\label{eqn:CW16-sec-1-approximation}
\end{align}
Denote the Hilbert space $\mathcal{L}$ acting on as $\mathcal{H}$. By definition,
\begin{equation}\begin{aligned}
\|\mathcal{E} - (I+\delta\mathcal{L})\|_\diamond = \max_{Q\in\mathcal{H\otimes K}}
\left\|\left(
\mathcal{E}\otimes I_\mathcal{K} - (I_\mathcal{H}+\delta\mathcal{L})\otimes I_\mathcal{K}
\right)Q\right\|_1,
\end{aligned}\end{equation}
for an arbitrary Hilbert space $\mathcal{K}$ and $Q\in \mathcal{H\otimes K}$ that $\|Q\|_1 = 1$. Bound the right-hand side and we will have a bound for $\|\mathcal{E} - (I+\delta\mathcal{L})\|_\diamond$.
% For all valid $\mathcal{K},Q$,
% \begin{align}
% & \left\|\left(
% \mathcal{E}\otimes I_\mathcal{K} - (I_\mathcal{H}+\delta\mathcal{L})\otimes I_\mathcal{K}
% \right)Q\right\|_1 \\
% & =
% \left\|\frac{1}{\lambda}\left( \sum_{k=1}^q T_{0k}(\mathcal{F}_k\otimes I_\mathcal{K})Q + \sum_{j=1}^m c_j^2 (\mathcal{E}_j\otimes I_\mathcal{K})Q\right)
% - ((I_\mathcal{H}+\delta\mathcal{L})\otimes I_\mathcal{K})Q\right\|_1.\label{eqn:error-for-qdrift-like}
% \end{align}
Notice that for $j=1,\ldots,m$,
\begin{equation}\begin{aligned}
& c_j^2(\mathcal{E}_j\otimes I_\mathcal{K})Q \\
=& c_j^2 Q 
+ \lambda\delta\left((L_j\otimes I_\mathcal{K})Q(L_j\otimes I_\mathcal{K})^\dagger
- \frac{1}{2} \{L_j^\dagger L_j\otimes I_\mathcal{K},Q\}\right) \\
+&\frac{\lambda^2\delta^2}{4c_j^2} (L_j^\dagger L_j\otimes I_\mathcal{K})Q(L_j^\dagger L_j\otimes I_\mathcal{K}),
\end{aligned}\end{equation}
and for $l=1,\ldots,q$,
\begin{equation}\begin{aligned}
& T_{0l}(\mathcal{F}_l\otimes I_\mathcal{K})Q \\
=& T_{0l} Q -i\lambda\delta [T_{0l} V_{0l}\otimes I_\mathcal{K},Q] \\
+& T_{0l}\lambda^2\delta^2 (V_{0l} \otimes I_\mathcal{K})Q(V_{0l} \otimes I_\mathcal{K}).
\end{aligned}\end{equation}
Combining with
\begin{equation}\begin{aligned}
    \mathcal{E}\otimes I_\mathcal{K} = \frac{1}{\lambda}\left( \sum_{k=1}^q T_{0k}(\mathcal{F}_k\otimes I_\mathcal{K}) + \sum_{j=1}^m c_j^2 (\mathcal{E}_j\otimes I_\mathcal{K})\right),
\end{aligned}\end{equation}
we have
\begin{equation}\begin{aligned}
 & \left(
\mathcal{E}\otimes I_\mathcal{K} - (I_\mathcal{H}+\delta\mathcal{L})\otimes I_\mathcal{K}
\right)Q \\
=& \lambda\delta^2 \Big(
\sum_{l=1}^{q} T_{0l}(V_{0l} \otimes I_\mathcal{K})Q(V_{0l} \otimes I_\mathcal{K}) \\
  &+ \frac{1}{4}\sum_{j=1}^m (L_j^\dagger L_j\otimes I_\mathcal{K})Q(L_j^\dagger L_j\otimes I_\mathcal{K})/c_j^2
\Big),
\end{aligned}\end{equation}
and
\begin{equation}\begin{aligned}
\left\|\mathcal{E}\otimes I_\mathcal{K} - (I_\mathcal{H}+\delta\mathcal{L})\otimes I_\mathcal{K}
)Q\right\|_1
\leq & \lambda\delta^2 \left(c_0 + \frac{\sum_{j=1}^m \|L_j\|_1^4}{4c_j^2}\right) \\
\leq & (\lambda\delta)^2.
\end{aligned}\end{equation}
The last line holds because $\|L_j\|_1\leq c_j,j=1,\ldots m$.
Combined with Equation \eqref{eqn:CW16-sec-1-approximation} we have proved the result.
\end{proof}

Notice that $\mathcal{F}_l,l=1,\ldots,q$ and $\mathcal{E}_j,j=1,\ldots,m$ are not trace-preserving, but they almost do. For $j=1,\ldots,m+q$
$$
\left\|\sum_{k} A_{jk}^\dagger A_{jk}-I\right\|_1\leq(\delta\lambda)^2,
$$
where $A_{m+l,0}=I-i \lambda\delta V_{0l}$ for $l=1,\ldots,q$. This suggests that in our construction each individual channel is positive and trace-preserving (up to an error of $O(\delta^2)$), and thus can be implemented as individual quantum circuits.

\section{Simulation via Sampling}\label{sec:sim-via-sampling}

A natural way of implementing $\mathcal{E}$ is by sampling from implementations of these individual channels, which removes the dependence on the jump operator number $m$.
Combined with the algorithmic technique of oblivious amplitude amplification for isometries (Lemma 2 in \cite{CW16}), we propose Algorithm \ref{alg:lind-sim} for Lindbladian simulation, and analyze its approximation error and gate complexity.

\subsection{Individual Channel Simulation Gadget}\label{sec:individual-channel-gadget}

% In this subsection, we show $\{\mathcal{F}_1,\ldots,\mathcal{F}_q,\mathcal{E}_1,\ldots,\mathcal{E}_m\}$ can implemented.

All quantum channels can be expressed as the partial trace of a unitary operation in a larger Hilbert space, a process known as purification.
Extending this notion, we define "probabilistic purification". It is a unitary acting on $\mathcal{H}_{\mathrm{ctrl}}\otimes\mathcal{H}_{\mathrm{sel}}\otimes\mathcal{H}$, and it performs corresponding quantum channel on tracing out selection space $\mathcal{H}_{\mathrm{sel}}$ and control space $\mathcal{H}_{\mathrm{ctrl}}$.
\begin{definition}
Given a channel $\mathcal{G}(\rho) = \sum_{j=1}^g A_j\rho A_j^\dagger$, the probabilistic purification of $\mathcal{G}$ is a unitary $U_{\mathcal{G},p}$ satisfying
\begin{equation}\begin{aligned}
U_{\mathcal{G},p}\ket{0}_{\mathrm{ctrl}}\ket{0}_{\mathrm{sel}}\ket{\psi} = \sqrt{p} \ket{0} _{\mathrm{ctrl}}\left(\sum_{j=1}^g \ket{j}_{\mathrm{sel}} A_j\ket{\psi} \right) + \ket{\perp},
\end{aligned}\end{equation}
where $\ket{\perp}$ is orthogonal to the subspace of the control register in the state $\ket{0}$, i.e.  $\operatorname{tr}\left((\ket{0}\bra{0}_{\mathrm{ctrl}}\otimes I_{\mathrm{sel}}\otimes I)\ket{\perp}\bra{\perp}\right)=0$, and $p$ is called the purification probability.
\end{definition}

Once we can construct a probabilistic purification circuit, we can perform the corresponding quantum channel.
Next we show $\{\mathcal{F}_1,\ldots,\mathcal{F}_q,\mathcal{E}_1,\ldots,\mathcal{E}_m\}$ have probabilistic purification circuits with probability $p=1-2\lambda\delta$.
These channels share a common feature: the Kraus operators  are polynomials of a matrix with a Pauli string decomposition. The property motivates the following definition:
\begin{definition}
$(n,c,d,V,q)$-generated quantum channel is a quantum channel with $c$ Kraus operators, acting on a $n$-qubit Hilbert space, such that:
\begin{enumerate}
    \item Each Kraus operator $K_j=f_j(V)$ is a polynomial of the matrix $V$ with degree $\deg f_j< d$ for $j=1,\ldots, c$,
    \item The matrix $V$ is a sum of $q$ Pauli strings, i.e. $V=\sum_{k=1}^q v_k P_k$.
\end{enumerate}
\end{definition}

The following theorem introduces the structured linear combination of unitaries (LCU) method, enabling efficient implementation of the quantum channels defined above. In contrast to the channel LCU method \cite{CW16}, which exhibits exponential scaling with polynomial degree \(d\), our method achieves quadratic scaling in \(d\).

\begin{theorem}[structured LCU method]
\label{thm:sLCU}
The $(n,c,d,V,q)$-generated quantum channel $\mC$ can be implemented efficiently.
More precisely, the probabilistic purification $U_{\mC,p}$ for $\mC$ can be implemented with $\tilde O(cd^2 qn)$ elementary gates, for any $p\leq p_0$ defined in Eq.~\eqref{eqn:p0}.
\end{theorem}
\begin{proof}
    For the case $c=1$, the theorem simplifies to the addition and multiplication of block-encoded matrices.
    Using the standard LCU method \cite{QSVT,LCU}, $V$ has an $(n_V,\log q, 0)$-block-encoding, where $n_V=\paulinorm{V}=\sum_{k=1}^q |v_k|$, with a gate cost of $\tilde O(qn)$.
    By \cite[Lemma 53]{QSVT}, $V^k$ has an $(n_V^k ,k\log q,0)$-block-encoding at a cost of $\tilde O(kqn)$.
    Now consider $f_j(V)=\sum_{k=0}^{d-1} f_{jk}V^k$ for $j=1$. Rewrite $f_j(V)$ as $\sum_{k=0}^{d-1} f_{jk} n_V^j\cdot \frac{V^k}{n_V^k}$, where $\frac{V^k}{n_V^k}$ has a $(1,(d-1)\log q,0)$-block-encoding for $k=0,\ldots,d-1$. Applying \cite[Lemma 54]{QSVT}, $f_j(V)$ has an $(n_{j},\log d+(d-1)\log q,0)$-block-encoding where $n_{j}=\sum_{k=0}^{d-1}|f_{jk}n_V^j|$, with a gate cost of $\tilde O(d^2 qn)$.
    The circuit can be expressed as
    \begin{equation}
    U_j \ket{0}\ket{\psi} = \ket{0}\frac{f_j(V)}{n_j}\ket{\psi}+\ket{\perp}.
    \end{equation}
    
    For the general case, we first perform a general rotation, followed by controlled-$U_j$:
    \begin{equation}
    \begin{aligned}
    &\ket{0}_\textrm{sel}\ket{0}_\textrm{ctrl}\ket{\psi} \\
    \to & \sum_{j=1}^c \sqrt{p_j}\ket{j}_\textrm{sel}\ket{0}_\textrm{ctrl}\ket{\psi} \\
    \to & \sum_{j=1}^c \sqrt{p_j}\ket{j}_\textrm{sel}(U_j\ket{0}_\textrm{ctrl}\ket{\psi}) \\
    \to & \sum_{j=1}^c \ket{j}_\textrm{sel}\left(\ket{0}_\textrm{ctrl} \sqrt{p_j}\frac{f_j(V)}{n_j}\ket{\psi}+\ket{\perp_j}\right) \\
    \to & \sqrt{p_0}\ket{0}_\textrm{ctrl}\sum_{j=1}^c \ket{j}_\textrm{sel}f_j(V)\ket{\psi}+\ket{\perp},
    \end{aligned}
    \end{equation}
    where
    \begin{equation}\label{eqn:p0}
    \begin{aligned}
    p_0&=(\sum_k n_k^2)^{-1}, \\
    p_j&=n_j^2(\sum_k n_k^2)^{-1}, j=1,\ldots, c.
    \end{aligned}
    \end{equation}
    The total cost is $\tilde O(cd^2 qn)$. For $p\leq p_0$, an additional control qubit with a rotation suffices.
\end{proof}
We note that the same task can be achieved using channel LCU method in \cite{CW16}, yielding the same $p_0$ as in Eq.~\eqref{eqn:p0}, but with a gate cost of $\tilde O(c q^d n)$.

Based on Theorem \ref{thm:sLCU}, we show how to implement the probabilistic purification for any channel in $\{\mathcal{F}_1,\ldots,\mathcal{F}_q,\mathcal{E}_1,\ldots,\mathcal{E}_m\}$.
First, we consider the simulation of $\mathcal{F}_l$.
\begin{corollary}\label{cor:pp1}
    Probabilistic purification of $\mathcal{F}_l$ with purification probability $p=1-2\lambda\delta$ can be implemented with $O(n)$ gates for $l=1,\ldots,q$.
\end{corollary}
The channel $\mF_l$ satisfies the condition of Theorem \ref{thm:sLCU} and takes the parameters $c=1, d=1, q=1$. For clarity, we still provide a full circuit construction in the proof.
\begin{proof}
Let 
\[
R_\alpha = \frac{1}{\sqrt{\cos\alpha + \sin\alpha}} \begin{bmatrix} \sqrt{\cos\alpha} & \sqrt{\sin\alpha} \\ \sqrt{\sin\alpha} & -\sqrt{\cos\alpha} \end{bmatrix}.
\]
Circuit \ref{fig:pauli-gadget} performs the map
\begin{equation*}
\ket{0,0}_{\mathrm{ctrl}}\ket{\psi}
\mapsto
\sqrt{q_{\alpha,\beta_1}}\ket{0,0}_{\mathrm{ctrl}} (I - iV_{0l}\tan\alpha)\ket{\psi}+\ket{\perp}
\end{equation*} 
for some state $\ket{\perp}$ orthogonal to the subspace of $\ket{0,0}_{\mathrm{ctrl}}$, and $q_{\alpha,\beta_1} = \frac{1}{(1+\tan\alpha)^2(1+\tan\beta_1)^2}$.

Suppose $\beta=0$. Let $\alpha = \arctan\lambda\delta$, then the circuit implements a probabilistic purification of $\mathcal{F}_l$ with purification probability $  p_1 = \frac{1}{(1+\lambda\delta)^2}>1-2\lambda\delta=p$.  The second control qubit and rotation gate $R_{\beta_1}$ with $\beta_1=\arctan \left(\sqrt{\frac{p_1}{p}} - 1\right)$ would tune the purification probability to $p$.

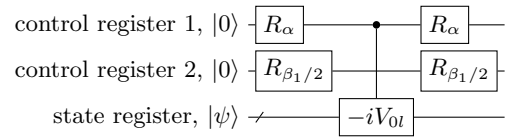
\begin{figure}[htbp]
    \centering
    \begin{tikzpicture}
    \begin{yquant}
        qubit {control register 1, $\ket{0}$} control_reg;
        qubit {control register 2, $\ket{0}$} control_reg_extra;
        qubit {state register, $\ket{\psi}$} state_reg;
        slash state_reg;

        box {$R_{\beta_1/2}$} control_reg_extra;
        box {$R_{\alpha}$} control_reg;
        box {$-iV_{0l}$} state_reg | control_reg;
        align control_reg, control_reg_extra;
        box {$R_{\alpha}$} control_reg;
        box {$R_{\beta_1/2}$} control_reg_extra;
    \end{yquant}
    \end{tikzpicture}
    
    \caption{Gadget circuit for simulating $\mathcal{F}_l$ modified from \cite{BCCKS14}.}
    \label{fig:pauli-gadget}
\end{figure}

\end{proof}

% Considering the simulation of $\mathcal{E}_j$, its Kraus operators are highly structured. Therefore, it is possible to simulate $\mathcal{E}_j$ by using a nested LCU circuit that costs only $O(qn)$, while the general LCU method \cite{CW16} costs $O(q^2 n)$ in this scenario.
% Intuitively, we create $L_j$ conditioning on the selection register being $\ket{0}$. On selection register being $\ket{1}$, we first create $-L_j^\dagger L_j$, and then create its linear combination with the identity operator $I$. We tweak the rotation gates on the selection register and the first control register to implement the correct coefficients.
The channel $\mathcal{E}_j$ satisfies the condition of Theorem \ref{thm:sLCU} if we view the term $I-L_j^\dagger L_j$ as a 2-degree polynomial of $L_j$. It takes the parameters $c=2,d=2$.
Theorem \ref{thm:sLCU} suggests there exists probabilistic purification $U_{\mE_j,p}$ with gate cost $O(qn)$, while the channel LCU method costs $O(q^2 n)$. In the proof below a full circuit construction is given.

\begin{corollary}\label{cor:pp2}
Probabilistic purification circuit of $\mathcal{E}_j$ with purification probability $p=1-2\lambda\delta$ can be implemented with $O(qn)$ gates for $j=1,\ldots,m$.
\end{corollary}
\begin{proof}
Define $U_j = \{V_{jl} | l=1,\ldots,q\}$, $-U_j = \{-V_{jl} | l=1,\ldots,q\}$ and $U^\dagger_j = \{V^\dagger_{jl} | l=1,\ldots,q\}$. Figure \ref{fig:sel-U} depicts a set of binary-controlled unitary matrices.

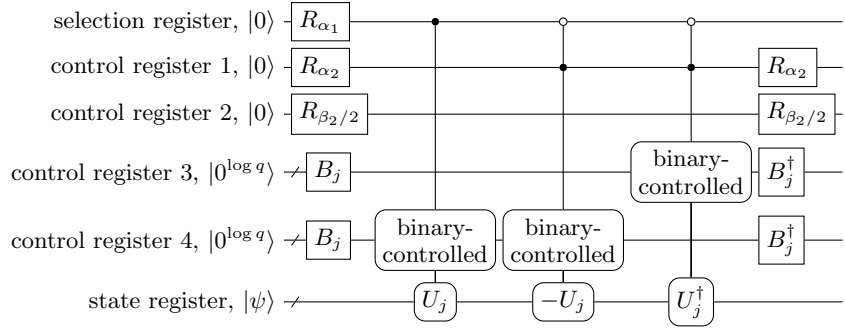
\begin{figure*}[htbp]
    \centering
    \begin{tikzpicture}
    \begin{yquant}
        qubit {selection register, $\ket{0}$} sel_reg;
        qubit {control register 1, $\ket{0}$} control_reg;
        qubit {control register 2, $\ket{0}$} control_reg_extra;
        qubit {control register 3, $\ket{0^{\log q}}$} anc1;
        slash anc1;
        qubit {control register 4, $\ket{0^{\log q}}$} anc2;
        slash anc2;
        qubit {state register, $\ket{\psi}$} state_reg;
        slash state_reg;

        box {$R_{\beta_2/2}$} control_reg_extra;
        box {$R_{\alpha_1}$} sel_reg;
        box {$R_{\alpha_2}$} control_reg;
        box {$B_j$} anc1;
        box {$B_j$} anc2;

        controlbox {$U_j$} state_reg;
        [shape=yquant-rectangle, rounded corners=.45em]
        box {binary-\\controlled} anc2 | state_reg, sel_reg;
        controlbox {$-U_j$} state_reg;
        [shape=yquant-rectangle, rounded corners=.45em]
        box {binary-\\controlled} anc2 | state_reg,control_reg ~ sel_reg;
        controlbox {$U^\dagger_j$} state_reg;
        [shape=yquant-rectangle, rounded corners=.45em]
        box {binary-\\controlled} anc1 | state_reg,control_reg ~ sel_reg;

        align control_reg, control_reg_extra;
        box {$R_{\beta_2/2}$} control_reg_extra;
        box {$R_{\alpha_2}$} control_reg;
        box {$B^\dagger_j$} anc1;
        box {$B^\dagger_j$} anc2;
    \end{yquant}
    \end{tikzpicture}
    \caption{Gadget circuit for simulating $\mathcal{E}_j$, where $B_j \ket{0} = \sum_{k=1}^q \sqrt{T_{jk}} \ket{k}$.}
    \label{fig:jump-operator-gadget}
\end{figure*}
Circuit \ref{fig:jump-operator-gadget} represents selection space with one register (selection register), and control space with four registers (control register 1, 2, 3, 4).
Let us ignore control register 1 for now. The circuit maps the initial state
$\ket{0}_{\mathrm{sel}}\allowbreak\ket{0}_{\mathrm{ctrl}}\allowbreak\ket{\psi}$
to
\begin{equation}
\begin{aligned}
    \sqrt{p_{\alpha_1,\alpha_2}}
     &\left[\ket{1}_{\mathrm{sel}}\ket{0}_{\mathrm{ctrl}} (1+\tan\alpha_2)\sqrt{\tan\alpha_1}\frac{L_j}{c_j}\ket{\psi} \right. \\
    + & \left. \ket{0}_{\mathrm{sel}}\ket{0}_{\mathrm{ctrl}} \left( I - \tan\alpha_2\frac{L_j^\dagger L_j}{c_j^2}\right)\ket{\psi} \right] \\
    +&\ket{\perp},
\end{aligned}
\end{equation}
where
\begin{equation}\begin{aligned}
    p_{\alpha_1,\alpha_2} &= \frac{1}{(\tan\alpha_2 + 1)^2(\tan\alpha_1 + 1)},
\end{aligned}\end{equation}
and
% \allowdisplaybreaks{$\operatorname{tr}\left((I\otimes\ket{0}\bra{0}\otimes(\ket{0}\bra{0})^{\otimes\log q}\otimes(\ket{0}\bra{0})^{\otimes\log q}\otimes I)\ket{\perp}\bra{\perp}\right)=0$}.
$\ket{\perp}$ is orthogonal to the subspace of $\ket{0}_\mathrm{ctrl}$.
By setting $\alpha_1 = \arctan \left(\frac{\lambda\delta}{(1+\lambda\delta/2)^2}\right)$ and $\alpha_2 = \arctan \frac{\lambda \delta}{2}$, Circuit \ref{fig:jump-operator-gadget} (without control register 1) implements probabilistic purification of $\mathcal{E}_j$ with purification probability $ p_2 = \frac{1}{\lambda\delta + (1+\lambda\delta/2)^2}>1-2\lambda\delta=p$.

Now we take control register 1 back into consideration. With $\beta_2=\arctan \left(\sqrt\frac{p_2}{p} - 1\right)$, it tunes the purification probability to $p$.

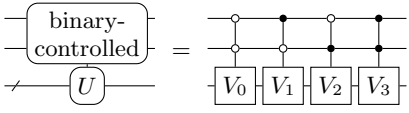
\begin{figure}[htbp]
    \centering
    \begin{tikzpicture}
    \begin{yquantgroup}
        \registers{
            qubit {} anc[2];        
            qubit {} state_reg;
        }
        \circuit{
            slash state_reg;

            controlbox {$U$} state_reg;
            [shape=yquant-rectangle, rounded corners=.45em]
            box {binary-\\controlled} (anc) | state_reg;
        }
        \equals
        % \circuit{
        %     box {sel-$U$} (anc,state_reg);
        % }
        % \equals
        \circuit{    
            box {$V_0$} state_reg ~ anc[0], anc[1];
            box {$V_1$} state_reg | anc[0] ~  anc[1];
            box {$V_2$} state_reg | anc[1] ~  anc[0];
            box {$V_3$} state_reg | anc[0], anc[1];
        }
    \end{yquantgroup}
    \end{tikzpicture}
    \caption{Example of binary-controlled unitary matrices where the set of unitary matrices is $U=\{V_0,V_1,V_2,V_3\}$.}
    \label{fig:sel-U}
\end{figure}
\end{proof}

By Corollary \ref{cor:pp1} and \ref{cor:pp2} we have implemented probabilistic purification circuits $U_{\mathcal{G},p}$ for $\mathcal{G}\in\{\mathcal{G}_1=\mathcal{F}_1,\ldots,\mathcal{G}_q=\mathcal{F}_q,\mathcal{G}_{q+1}=\mathcal{E}_1,\ldots,\mathcal{G}_{q+m}=\mathcal{E}_m\}$.
\subsection{The Algorithm}

% \qi{this subsection is too short, there are several places with similar problems. maybe merge it with the others}
\begin{figure}[h]
\begin{algorithm}[H]
\caption{Randomized quantum algorithm for Lindbladian simulation}\label{alg:lind-sim}
\begin{algorithmic}
\Require Lindbladian represented by Hamiltonian $H$ and jump operators $\{L_1,\ldots,L_m\}$ with Pauli string decompositions, simulation time $t$ and precision $\epsilon$
% \Ensure quantum circuit $V_s,s=s_{11}s_{12}\ldots s_{r\tau}$
\begin{enumerate}
    %\item Separate the simulation into $\tau = O(\lambda t)$ segments. For each segment, set $r=2^{\lceil\log(\tau/\epsilon)\rceil}$, $\delta=t/(\tau r)$.  Repeat step \textbf{2} to \textbf{5} for each segments, with index $K$. %$\alpha = \sqrt{1/(1+\tan h\delta)}$
    \item Let $\tau = O(\lambda t)$, and $r=2^{\lceil\log(\tau/\epsilon)\rceil}$, $\delta=t/(\tau r)$.
    \item Initialize the selection registers and control registers to $\ket{0}^{\otimes r} \ket{0}^{\otimes r}$. Each selection register has $1$ qubit and each control register has $2\lceil\log q\rceil+1$ qubits.
    \item For $k_1=1,\ldots,\tau$,
    \begin{enumerate}
        \item For $k_2=1,\ldots,r$,
        \begin{enumerate}
        \item Sample a channel $\mathcal{G}$ from $\{\mathcal{F}_1,\ldots,\mathcal{F}_q\allowbreak,\mathcal{E}_1,\ldots,\mathcal{E}_m\}$ with probability $$
            \begin{cases} 
            \Pr(\mathcal{G} = \mathcal{F}_{l}) = T_{0l}/\lambda,&l=1,\ldots,q \\
            \Pr(\mathcal{G} = \mathcal{E}_{j}) = c^2_j/\lambda ,&j=1,\ldots,m
            \end{cases}
            $$ and carry out $U_{\mathcal{G},p}$ on the $k_2$-th selection register, the $k_2$-th control register and the state register. The index of the sampled channel is saved as $s_{k_1 k_2}$.  
        \end{enumerate}
    \item Denote the circuit formed in step \textbf{3a} in the $k_1$-th iteration as $W_{k_1}$, which acts as a constant-time simulation. Add an extra qubit as a control register and a rotation gate $R\ket{0} = p^{-r/2}/2\ket{0} + \sqrt{1-p^{-r}/4}\ket{1}$ on it.
    \item Use oblivious amplitude amplification for isometries on $W_{k_1}$, with the target state being $\ket{0}\ket{0}^{\otimes r}$ on the control registers. The result is $F_{k_1}$.
    \end{enumerate}
    \item Output the result circuit $V_s=F_\tau\ldots F_2F_1$.
\end{enumerate}
\end{algorithmic}
\end{algorithm}
\end{figure}
Based on these gadget circuits we propose Algorithm \ref{alg:lind-sim} and provide a rigorous error analysis. We first consider the circuit for simulating the first segment. The indices of the sampled channel form a string $u_1=s_{11} s_{12} \ldots s_{1r}\in\{1,2,\ldots,q+m\}^r$ and corresponding Kraus operators are $\{A_{1,1},\ldots,A_{1,n_1}\},\ldots,\{A_{r,1},\ldots,A_{r,n_r}\}$.
% For simplicity, we define the index string $u_1=v_1\ldots v_r$ and $s=u_1\ldots u_\tau = v_1\ldots v_{r\tau}$.
By ignoring the extra qubit in step \textbf{3a}, the circuit $W_{1}$ performs the map
\begin{equation}\begin{aligned}
            & \ket{0}^{\otimes r}_{\mathrm{ctrl}} \ket{0}^{\otimes r}_{\mathrm{sel}}\ket{\psi} \\
    \mapsto & \sqrt{p^r}\ket{0}^{\otimes r}_{\mathrm{ctrl}} \left(\sum_{j_1,\ldots,j_r} \ket{j_1,j_2,\ldots,j_r}_{\mathrm{sel}} \prod_{k=r}^{1} A_{j,j_k} \ket{\psi}\right) + \ket{\perp},
\end{aligned}\end{equation}
where $\operatorname{tr}\left((\ket{0}\bra{0})_{\mathrm{ctrl}}^{\otimes r}\otimes I^{\otimes r}_{\mathrm{sel}} \otimes I)\ket{\perp}\bra{\perp}\right)=0$. Notice that $p^r \geq (1-1/r)^r>1/4$ by assuming $\tau\geq 2\lambda t$ and thus $R\ket{0}$ is a legal quantum state. With the extra qubit, we have
\begin{equation}\begin{aligned}
    W_{1} \ket{\Psi} = \frac{1}{2}\ket{\Phi} + \frac{\sqrt{3}}{2}\ket{\Phi^\perp},
\end{aligned}\end{equation}
where
\begin{equation}\label{eq:definestate}
\begin{aligned}
\ket{\Psi} &= \ket{0}\ket{0}_{\mathrm{ctrl}}^{\otimes r} \ket{0}_{\mathrm{sel}}^{\otimes r}\ket{\psi}, \\
\ket{\Phi}& = \ket{0}\ket{0}^{\otimes r}_{\mathrm{ctrl}} \left(\sum_{j_1,\ldots,j_r} \ket{j_1,j_2,\ldots,j_r}_{\mathrm{sel}} \prod_{k=r}^{1} A_{j,j_k} \ket{\psi}\right).    
\end{aligned}
\end{equation}
This enables us to perform oblivious amplitude amplification for isometries, which is $F_1$.
% \begin{equation*}
%     F_1 = - W_1 (I-2P_1) W_1^\dagger (I-2P_0) W_1,
% \end{equation*}
% where $P_0=\ket{0}\bra{0}\otimes(\ket{0}\bra{0})^{\otimes r}\otimes I^{\otimes r}\otimes I$ and $P_1=\ket{0}\bra{0}\otimes(\ket{0}\bra{0})^{\otimes r}\otimes (\ket{0}\bra{0})^{\otimes r}\otimes I$.
The following lemma restates Lemma 2 in \cite{CW16} except that an index string is used to specify the sampled channels. This lemma aims to estimate the error of constant-time evolution via stacking short-time evolution and then apply the oblivious amplitude amplification.
The lemma suggests that the error is of the order $O(1/r)$: the error introduced by oblivious amplitude amplification is $\diamondnorm{\mM-\mE^r}=O(1/r)$, and the cumulative error from the short-time evolution channel is $\diamondnorm{\mE^r-e^{\mL r\delta}}\leq r\diamondnorm{\mE-e^{\mL\delta}}=O(1/r)$.

\begin{lemma}\label{lem:4}
For any state $\ket{\psi}$ and index string $u_1=s_{11} s_{12} \ldots s_{1r}\in\{1,2,\ldots,q+m\}^r$, if we define
\begin{equation}
    F_{1}=-W_{1}(I-2P_1)W_{1}^\dagger(I-2P_0)W_{1},
\end{equation}
where $P_0 = \ket{0}\bra{0}\otimes \left(\ket{0}\bra{0}\right)_{\mathrm{ctrl}}^{\otimes r} \otimes I^{\otimes r}_{\mathrm{sel}}\otimes I$ and $P_1=\ket{0}\bra{0}\otimes \left(\ket{0}\bra{0}\right)_{\mathrm{ctrl}}^{\otimes r}\otimes \left(\ket{0}\bra{0}\right)_{\mathrm{sel}}^{\otimes r}\otimes I$, then
\begin{equation}\label{eqn:purification-distance}
    \|F_1\ket{\Psi} - \ket{\Phi}\|_2
    \leq r(\delta \|\mathcal{L}\|_\mathrm{pauli})^2,
\end{equation}
where $\ket{\Psi}$ and $\ket{\Phi}$ are defined in Eq.~\eqref{eq:definestate}
\end{lemma}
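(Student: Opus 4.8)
The plan is to apply the oblivious amplitude amplification for isometries exactly as in Lemma 2 of \cite{CW16}, treating the index string $u_1$ as fixed throughout so that the Kraus operators $\{A_{k,j_k}\}$ and hence the map performed by $W_1$ are deterministic. First I would recall the structure established just above: ignoring the extra amplitude-tuning qubit, $W_1$ sends $\ket{0}^{\otimes r}_{\mathrm{ctrl}}\ket{0}^{\otimes r}_{\mathrm{sel}}\ket{\psi}$ to $\sqrt{p^r}\ket{0}^{\otimes r}_{\mathrm{ctrl}}\bigl(\sum_{j_1,\ldots,j_r}\ket{j_1,\ldots,j_r}_{\mathrm{sel}}\prod_{k=r}^{1}A_{j,j_k}\ket{\psi}\bigr)+\ket{\perp}$, and with the extra qubit and the rotation $R$ we get $W_1\ket{\Psi}=\tfrac12\ket{\Phi}+\tfrac{\sqrt3}{2}\ket{\Phi^\perp}$, where $\ket{\Phi}$ is the (generally unnormalized) target state in Eq.~\eqref{eq:definestate} and $\ket{\Phi^\perp}$ lies in the orthogonal complement of the range of $P_1$. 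The key point is that $\ket{\Psi}$ is supported on the $P_0$-subspace and $\ket{\Phi}$ on the $P_1$-subspace, with overlap amplitude exactly $1/2$ by construction of $R$; this is precisely the setup in which the reflection operator $F_1=-W_1(I-2P_1)W_1^\dagger(I-2P_0)W_1$ implements a single step of amplitude amplification that would be exact if $\ket{\Phi}$ were normalized.

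Next I would quantify the deviation of $\|\ket{\Phi}\|_2$ from $1$. Since $\prod_{k=r}^{1}A_{j,j_k}$ is the product of $r$ individual Kraus operators, each within $O(\lambda\delta)$ of being an isometry on its block—concretely, for each sampled channel $\sum_k A_{jk}^\dagger A_{jk}=I+O((\lambda\delta)^2)$ as noted after the Lemma—the operator $\sum_{j_1,\ldots,j_r}\bigl(\prod A_{j,j_k}\bigr)^\dagger\bigl(\prod A_{j,j_k}\bigr)$ differs from $I$ by at most $O(r(\lambda\delta)^2)$ in operator norm, by a telescoping argument over the $r$ layers. Hence $\bigl|\,\|\ket{\Phi}\|_2^2-1\,\bigr|\le O(r(\lambda\delta)^2)$, and I would absorb constants so that the bound reads $r(\delta\|\mathcal L\|_{\mathrm{pauli}})^2$. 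I would then invoke the robustness statement of oblivious amplitude amplification: if the true "success amplitude" is $\tfrac12\sqrt{1\pm O(r(\lambda\delta)^2)}$ rather than exactly $\tfrac12$, the single amplification step produces a state within $O$ of that error from the ideal normalized $\ket{\Phi}/\|\ket{\Phi}\|_2$, and combining with the normalization error gives $\|F_1\ket{\Psi}-\ket{\Phi}\|_2\le r(\delta\|\mathcal L\|_{\mathrm{pauli}})^2$.

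The main obstacle I anticipate is bookkeeping rather than conceptual: one must be careful that the "error" being amplified is not a small-norm perturbation of a unit vector in the usual sense but a slight sub-normalization of an isometry-like map across $r$ composed layers, and that the amplitude-tuning qubit with rotation $R$ (whose angle depends on $p^{-r}$) does not itself introduce a first-order error—this is why the construction uses $p^r>1/4$ so that $R\ket{0}$ is a legitimate state and the nominal overlap is exactly $1/2$. I would therefore structure the proof as: (i) fix $u_1$ and write out $W_1\ket{\Psi}$ precisely; (ii) bound $\bigl|\|\ket{\Phi}\|_2-1\bigr|$ by $O(r(\lambda\delta)^2)$ via the layer-by-layer telescoping of the near-isometry property; (iii) cite the exact-case analysis of oblivious amplitude amplification for isometries from \cite{CW16} and track how the $O(r(\lambda\delta)^2)$ normalization defect propagates through the single reflection step; (iv) collect constants into the stated bound. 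Since $\delta$ will be chosen so that $r(\lambda\delta)^2$ is small, all the perturbative estimates are in their regime of validity.
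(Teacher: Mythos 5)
Your proposal follows essentially the same route as the paper: fix the index string, use $W_1\ket{\Psi}=\tfrac12\ket{\Phi}+\tfrac{\sqrt3}{2}\ket{\Phi^\perp}$, bound the near-isometry defect of the composed Kraus layers by $r(\lambda\delta)^2$, and propagate it through one step of oblivious amplitude amplification as in Lemma 2 of \cite{CW16}. The paper simply writes out that propagation explicitly (via $\|F_1\ket{\Psi}-\ket{\Phi}\|_2=\sqrt3\|P_1\ket{\Psi^\perp}\|_2$ and the Gram-type operator $Q$ with $\|Q-I/4\|_{\text{op}}\leq r(\delta\lambda)^2/4$) rather than citing the robust statement, so your plan matches the paper's argument in substance.
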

\begin{proof}
To start with, $ P_0\ket{\Phi^\perp} = 0$. Then consider the circuit $F_1$,
\begin{equation}\begin{aligned}
    \|F_{1}\ket{\Psi} - \ket{\Phi}\|_2
    &= \|(W_{1}-4W_{1}P_1W_{1}^\dagger P_0 W_{1})\ket{\Psi}\|_2 \\
    &= \|(I-4P_1W_{1}^\dagger P_0 W_{1})\ket{\Psi}\|_2 \\
    &= \|\ket{\Psi}-2P_1W_{1}^\dagger\ket{\Phi}\|_2 \\
    &= \|P_1(\ket{\Psi}-2W_{1}^\dagger\ket{\Phi})\|_2.
\end{aligned}\end{equation}
Let $\ket{\Psi^\perp} = W_{1}^\dagger (\frac{\sqrt{3}}{2}\ket{\Phi} - \frac{1}{2}\ket{\Phi^\perp})$, then $W_{1}^\dagger \ket{\Phi} = \frac{1}{2}\ket{\Psi} + \frac{\sqrt{3}}{2}\ket{\Psi^\perp}$, and
\begin{equation}\begin{aligned}
    \|F_{1}\ket{\Psi} - \ket{\Phi}\|_2
    = \sqrt{3}\|P_1\ket{\Psi^\perp}\|_2.
\end{aligned}\end{equation}
To bound $\|P_1\ket{\Psi^\perp}\|_2$ we follow the analysis in \cite[Lemma 3]{CW16}. Consider the operator on the state register
\begin{equation}\begin{aligned}
    Q = \mathrm{tr}_{\mathrm{anc}}
    \left(
    (
    \ket{0}\bra{0}\otimes
    (\ket{0}\bra{0})^{\otimes r}_\mathrm{ctrl}\otimes
    (\ket{0}\bra{0})^{\otimes r}_\mathrm{sel}
    ) W^\dagger P_0 W
    \right),
\end{aligned}\end{equation}
where we use $\mathrm{tr}_\mathrm{anc}$ to denote partial trace operation that only leaves the state space. For any state $\ket{\psi}$ on the state register,
\begin{equation}\begin{aligned}
\braket{\psi|Q|\psi}
&= \|P_0 W \ket{0}\ket{0}^{\otimes r}_\mathrm{ctrl}\ket{0}^{\otimes r}_\mathrm{sel}\ket{\psi}\|_2^2 \\
&= \frac{1}{4}\braket{\Psi|\Psi},
\end{aligned}\end{equation}
while
\begin{equation}\begin{aligned}
        &| \braket{\Psi|\Psi} - 1 | \\
    =   &\left| \operatorname{tr}\left[\left(\sum_{j_1,\ldots,j_r} \prod_{l=1}^{r} A_{j,j_l}^\dagger \prod_{k=r}^{1} A_{j,j_k}-I\right) \ket{\psi} \bra{\psi}\right]\right| \\
    \leq &\left\|\sum_{j_1,\ldots,j_r} \prod_{l=1}^{r} A_{j,j_l}^\dagger \prod_{k=r}^{1} A_{j,j_k}-I\right\|_1 \\
    \leq &r(\delta \|\mathcal{L}\|_\mathrm{pauli})^2.
\end{aligned}\end{equation}
This suggests all eigenvalues of $Q$ is about $\frac{1}{4}$, i.e. $\opnorm{Q-I/4}\leq r(\delta \|\mathcal{L}\|_\mathrm{pauli})^2/4$. Notice that
\begin{equation}\begin{aligned}
    (Q-I/4)\ket{\psi} = \frac{\sqrt{3}}{4} (\bra{0}\bra{0}^{\otimes r}_\mathrm{ctrl}\bra{0}^{\otimes r}_\mathrm{sel}\otimes I)\ket{\Psi^\perp},
\end{aligned}\end{equation}
then
\begin{equation}\begin{aligned}
    \|F_{1}\ket{\Psi} - \ket{\Phi}\|_2 &= \sqrt{3}\|P_1\ket{\Psi^\perp}\|_2 \\
    &= \sqrt{3}\|(\bra{0}\bra{0}^{\otimes r}_\mathrm{ctrl}\bra{0}^{\otimes r}_\mathrm{sel}\otimes I)\ket{\Psi^\perp}\|_2 \\
    &= 4 \|(Q-I/4)\ket{\psi}\|_2 \\
    &\leq 4\opnorm{Q-I/4} \\
    &\leq r(\delta \|\mathcal{L}\|_\mathrm{pauli})^2 = O(1/r).
\end{aligned}\end{equation}
\end{proof}

Here $\ket{\Phi}$ is the purification of $\prod_{j=r}^1 \mathcal{G}_{s_{1j}}[\ket{\psi}\bra{\psi}]$ for some arbitrary state $\ket{\psi}$, and $F_1\ket{\Psi}$ is an $O(1/r)$-approximation of it.
This suggests $F_1$ simulates $\prod_{j=r}^1 \mathcal{G}_{s_{1j}}$ with error $O(1/r)$ on proper conditions, i.e.
\begin{align}
    \left\|\mathcal{M}_{u_1} - \prod_{j=r}^1 \mathcal{G}_{s_{1j}}\right\|_\diamond = O(1/r),
\end{align}
where $u_1$ is defined in Lemma \ref{lem:4}, and $\mathcal{M}_{u_1}$ is the quantum channel implemented by $F_{1}$ on tracing out selection registers and control registers, with ancillary registers initialized to $\ket{0}\ket{0}^{\otimes r}_\mathrm{ctrl}\ket{0}^{\otimes r}_\mathrm{sel}$. Then
\begin{equation}\begin{aligned}
    &\left\|\mathcal{M} - \mathcal{E}^r\right\|_\diamond \\
    \leq &\sum_{\substack{u_{1} \in\{1,\ldots,q+m\}^r}} p_{u_1} \left\|\mathcal{M}_{u_1} - \prod_{j=r}^1 \mathcal{G}_{s_{1j}}\right\|_\diamond \\
    = & O(1/r),    
\end{aligned}\end{equation}
where $\mathcal{M}$ is the quantum channel implemented by sampling $M_{u_1}$ with probability $p_{u_1} = \prod_{j=r}^1 \Pr(\mathcal{G}=\mathcal{G}_{s_{1j}})$.
If $s=u_1\ldots u_\tau=s_{11}\ldots s_{r\tau}$ is sampled with probability $p_s = \prod_{j=\tau}^1 p_{u_j}$ and the selection registers and control registers are traced out, then channel $\mathcal{N} = \mathcal{M}^\tau$ is realized by Algorithm \ref{alg:lind-sim}, and
\small
\begin{align}
    \|\mathcal{N} - e^{\mathcal{L}t} \|_\diamond \leq \tau(\|\mathcal{M} - \mathcal{E}^r\|_\diamond + r\|\mathcal{E}-e^{\mathcal{L}\delta}\|_\diamond) = O(\epsilon).
\end{align}
\normalsize
By setting $\tau$ large enough, there is $\|\mathcal{N} - e^{\mathcal{L}t}\|_\diamond\leq\epsilon$.

% \subsection{Circuit Size}
We also analyze the total gate count in the algorithm. Figure \ref{fig:pauli-gadget} and \ref{fig:jump-operator-gadget} describe gadget circuits to simulate individual Hamiltonian and jump operator respectively. The former costs $O(n)$ elementary gates, and the latter costs $O(qn)$ elementary gates.
Constant-time simulation circuit $W_{1}$ is composed of $r$ gadget circuits and costs $O(rqn)$.
Circuit $I-2P_0$ and $I-2P_1$ can be realized by multi-controlled $Z$ gates on $O(r\log q)$ qubits, and they cost $O(r\log q)$ in total.
Circuit $F_{1}$ costs $O(rqn)$ and $V_s$ costs $O(\tau rqn)=O(qn\tau^2/\epsilon)$. Then we obtain Theorem~\ref{thm:t2}.

At the end we review Algorithm \ref{alg:lind-sim}. The sampling process involves two key steps: first selecting from individual channels representing the Hamiltonian and all jump operators, and then, if the Hamiltonian channel is chosen, selecting from its Pauli strings. The first step leverages classical randomness in LME to eliminate $m$-dependence in Algorithm \ref{alg:lind-sim}. In comparison, qDRIFT \cite{qDRIFT} focuses on sampling Pauli strings to approximate Hamiltonian dynamics efficiently, particularly for systems with many Pauli terms. Although both algorithms employ randomization, Algorithm 1 samples super-operators (channels), whereas qDRIFT samples operators (Pauli strings), highlighting a fundamental difference in their approaches to simulating quantum systems.

\section{Simulation via Fractional Query Method}\label{sec:sim-via-frac-q}

In this section,  we apply the Hamming-weight cutoff technique to reduce $(t,\epsilon)$-dependence from $O(t^2/\epsilon)$ to $O\left(t\frac{\log (t/\epsilon)^2}{\log\log (t/\epsilon)}\right)$, at the cost of an extra $\tilde O(m)$ factor in the gate complexity.
In Sec.~\ref{sec:hwc}, we briefly review the Hamming-weight cutoff technique and its application in Hamiltonian simulation, which is generalized to Lindbladian simulation in \cite{CW16}.
In Sec.~\ref{sec:fsg}, we provide a new probabilistic purification circuit of $\mathcal{E}$ defined in Eq.~\eqref{eqn:the-channel-E} and shows that it meets the conditions for using the generalized Hamming-weight cutoff technique in \cite{CW16}.
%Some modifications are necessary to fit the original encoded rotation operator to our probabilistic purification circuit, which is explained in detail
We present the corresponding circuit design in Sec.~\ref{sec:ero}.
In the end, we combine them to propose Algorithm \ref{alg:lind-sim-encoded}, and analyze its gate complexity in Sec.\ref{sec:ta}.

\subsection{Hamming-weight Cutoff}\label{sec:hwc}
%Here we explain the intuition of Hamming-weight cutoff and what kind of construction can incorporate the same technique.

The Hamming-weight cutoff strategy is first proposed in \cite{CGMSY} to simulate continuous-time query in discrete query model efficiently, and then applied in Hamiltonian simulation \cite{BCCKS14} (see Figure \ref{fig:example-ham-sim-encoded}). A detailed explanation can be found in \cite[Section 2.4]{Kothari-phd-thesis}.
We will explain its intuition with an example of simulating $e^{-iH}$ where $H$ is both unitary and Hermitian.
We know that $e^{-i\delta H}=I-i\delta H+O(\delta^2)$, and $e^{-iH}=(I-i\delta H)^r + O(r\delta^2)$ where $r\delta=1$. Therefore, with $O(r)$ call of $-iH$ we can simulate $e^{-iH}$ with precision $O(1/r)$.

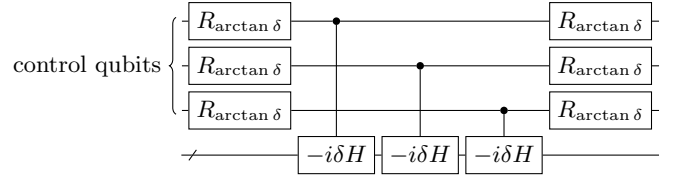
\begin{figure}
    \centering
    \begin{tikzpicture}
    \begin{yquantgroup}
        \registers{
            qubit {} ctrl_reg[3];
            qubit {} state_reg; 
        }
        \circuit{
            init {control qubits} (ctrl_reg);
        
            slash state_reg;
        
            box {$R_{\arctan\delta}$} ctrl_reg;
            box {$-i\delta H$} state_reg | ctrl_reg[0];
            box {$-i\delta H$} state_reg | ctrl_reg[1];
            box {$-i\delta H$} state_reg | ctrl_reg[2];
            box {$R_{\arctan\delta}$} ctrl_reg;
        }
    \end{yquantgroup}
    \end{tikzpicture}
    \caption{Hamiltonian simulation circuit without Hamming-weight cutoff. The circuit performs $(I-i\delta H)^r$ on the state register when measuring $\ket{0}$ in all control qubits. An $r=3$ example is presented.}
    \label{fig:example-ham-sim}
\end{figure}

However, there is another way that approximates the original circuit and requires only $O(\frac{\log r}{\log\log r})$ calls of $-iH$, and the same precision remains $O(1/r)$.
We expand polynomial $(I-i\delta H)^r$ with $H$ as the variable and only consider terms with degree $\leq k$. Equivalently, we can equip each term with
an $r$-bit string as its index, where "0" stands for taking $I$ and "1" stands for taking $-i\delta H$, and only consider terms whose indices' Hamming weight is no more than $k$. Setting $k=O(\frac{\log\epsilon^{-1}}{\log\log\epsilon^{-1}})$ would introduce an extra error of no more than $\epsilon$, therefore by setting $k=O(\frac{\log r}{\log\log r})$ the precision remains $O(1/r)$.

To obtain a succinct circuit that approximates $e^{-iH}$, we need a succinct representation of control qubits and the corresponding encoded rotation operator as well. Such an encoding scheme along with a gate-efficient encoded rotation operator $E$ is proposed in \cite{BCG14}.
The encoding scheme 
%$\mathcal{C}_r^k:\{0,1\}^{\otimes r}\to \{0,1\}^{\otimes (k+1)\log (r+1)}$ 
$\mathcal{C}_r^k:\mathcal{H}_2^{\otimes r}\to \mathcal{H}_{r+1}^{\otimes k}$
can be interpreted as a map from the original $r$-qubit Hilbert space to a smaller one, and this map maintains information on the subspace expanded by computational basis states with low Hamming weights. The encoded rotation operator is a unitary operator on the smaller Hilbert space such that
\begin{equation}\label{eqn:enc-error}
\left\|\mathcal{C}_r^k\left[(R_{\arctan\delta}\ket{0})^{\otimes r}\right]
-E\mathcal{C}_r^k\left[\ket{0}^{\otimes r}\right]\right\|_2\leq\epsilon_{\mathrm{enc}}
\end{equation}
for some encoding error $\epsilon_{\mathrm{enc}}$. The encoded initial state $\mathcal{C}_r^k\left[\ket{0}^{\otimes r}\right]=\ket{r}^{\otimes k}$ can be efficiently prepared.
\begin{figure}
    \centering
    \begin{tikzpicture}
    \begin{yquantgroup}
        \registers{
            qubit {} pos_reg[2];
            qubit {} state_reg; 
        }
        \circuit{
            init {position registers} (pos_reg);
            slash pos_reg;
            slash state_reg;

            box {$E$} (pos_reg);
            
            controlbox {$<r$} (pos_reg[0]);
            [shape=yquant-rectangle, rounded corners=.45em]
            box {$-iH$} (state_reg) | pos_reg[0];
            controlbox {$<r$} (pos_reg[1]);
            [shape=yquant-rectangle, rounded corners=.45em]
            box {$-iH$} (state_reg) | pos_reg[1];

            box {$E^\dagger$} (pos_reg);
        }
    \end{yquantgroup}
    \end{tikzpicture}
    \caption{Hamiltonian simulation circuit with Hamming-weight cutoff transformed from Figure \ref{fig:example-ham-sim}. The circuit approximates $(I-i\delta H)^r$ on the state register when measuring $\ket{r}^{\otimes k}$ in all control qubits. A $k=2$ example is presented.}
    \label{fig:example-ham-sim-encoded}
\end{figure}
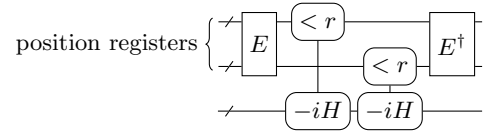

The Hamming-weight cutoff technique is generalized to Lindbladian simulation in \cite{CW16}. It turns out that, with minimal modifications, this generalized method can be applied to our method.

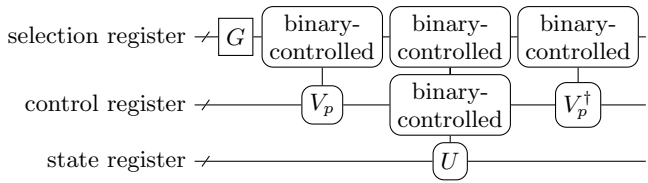
\begin{figure}
    \centering
    \begin{tikzpicture}
    \begin{yquantgroup}
        \registers{
            qubit {} sel_reg;
            qubit {} ctrl_reg;
            qubit {} state_reg; 
        }
        \circuit{
            init {selection register} sel_reg;
            init {control register} ctrl_reg;
            init {state register} state_reg;
            slash sel_reg;
            slash ctrl_reg;
            slash state_reg;

            box {$G$} sel_reg;
            
            controlbox {binary-\\controlled} sel_reg;
            [shape=yquant-rectangle, rounded corners=.45em]
            box {$V_p$} ctrl_reg | sel_reg;
            controlbox {binary-\\controlled} sel_reg, ctrl_reg;
            [shape=yquant-rectangle, rounded corners=.45em]
            box {$U$} (state_reg) | sel_reg, ctrl_reg;
            controlbox {binary-\\controlled} sel_reg;
            [shape=yquant-rectangle, rounded corners=.45em]
            box {$V_p^\dagger$} ctrl_reg | sel_reg;
        }
    \end{yquantgroup}
    \end{tikzpicture}
    \caption{Probabilistic purification circuit proposed in \cite{CW16}. The Hamming-weight cutoff technique applies to circuits that both have the same structure and satisfy the amplitude condition.}
    \label{fig:channel-lcu-CW16}
\end{figure}

\subsection{Fragment Simulation Gadget}\label{sec:fsg}

%In this subsection, we show an alternative implementation of $\mathcal{E}$.
Our target is to implement the probabilistic purification circuit of $\mathcal{E}$ that is suitable for the generalized Hamming-weight cutoff technique.
%\sirui{with the same structure as Figure \ref{fig:channel-lcu-CW16}, and show that the amplitude argument is satisfied.}

To start with, the probabilistic purification $U_{\mathcal{E},p}$ can be implemented efficiently by mixing gadgets in Figure \ref{fig:pauli-gadget} and \ref{fig:jump-operator-gadget} in the style presented in Figure \ref{fig:mixed-channel}. Multiple controlled gates can be combined as shown in Figure \ref{fig:combine}.
Next, we rearrange the circuit, fitting into the four-part structure: selection preparation $G$, control preparation $V_p$, multiplexed unitary $V_c$, and control preparation inverse $V_p^\dagger$, which is called the structure condition.
This is possible since any controlled gates commute as long as they are controlled by different binary strings on the first selection register, as shown in Figure \ref{fig:order-rearrange}. We can also simplify these controlled gates by some simple circuit synthesis techniques (see Figure \ref{fig:combine}). Precise construction is given in Figure \ref{fig:sel-prep-circuit} and \ref{fig:ctrl-prep-circuit}.
The resulting circuit in Figure \ref{fig:fragment-gadget-0226} is a probabilistic purification circuit for $\mathcal{E}$ with probability $p$ that is suitable for the generalized Hamming-weight cutoff technique. Next, we confirm this by the following amplitude argument. Intuitively, there is a subspace such that any state in this subspace leads to performing identity in the multiplexed unitary gate $V_c$. If the state after the preparation circuits $V_p,G$ "mostly" falls into this trivial subspace, then it can be compressed, and the number of multiplexed unitary gates can be reduced. In the following, we prove that, for our construction in Figure \ref{fig:fragment-gadget-0226}, this state indeed meets this condition.
Consider the state of all the selection and control registers.
% i.e. $\ket{\text{selection 1, selection 2}}_{\mathrm{sel}}\allowbreak\ket{\text{control 1, control 2, control 3, control 4}}_{\mathrm{ctrl}}$.
The state right before the multiplexed unitary gate in Figure \ref{fig:fragment-gadget-0226} is
\begin{equation}\begin{aligned}\label{eq:state}
&\sum_{l=1}^{q}\sqrt{\Pr(\mathcal{G}=\mathcal{F}_l)}\ket{l,0}_{\mathrm{sel}}\ket{\alpha,\beta_1/2,0,0}_{\mathrm{ctrl}} \\
+&\sum_{j=1}^{m}\sqrt{\Pr(\mathcal{G}=\mathcal{E}_j)}\ket{q+j,\alpha_1}_{\mathrm{sel}}\ket{\alpha_2,\beta_2/2,B_j,B_j}_{\mathrm{ctrl}},
\end{aligned}\end{equation}
where $\ket{\theta}=R_\theta \ket{0}$ for some angle $\theta$, $\ket{B_j}=B_j\ket{0}$ for $j=1,\ldots,m$, and $\beta_1,\beta_2$ are angles shown in Figure \ref{fig:pauli-gadget} and \ref{fig:jump-operator-gadget}, respectively.
Consider a subspace $s_I$ formed by a projection
$$P_I=(I\otimes\ket{0}\bra{0})_{\mathrm{sel}}\otimes (\ket{0}\bra{0}\otimes I\otimes I\otimes I)_{\mathrm{ctrl}}.$$
% In other words, $s_I$ refers to the subspace where the state of the last selection register and the first control register is $\ket{0}\ket{0}$.
All states in this subspace lead to performing identity $I$ in the following multiplexed unitary gate $V_c$. If we measure the state in Eq.~\eqref{eq:state} by a projective measurement $\{P_I,I-P_I\}$, the probability that we obtain the outcome corresponding to $P_I$ is
\begin{equation}\begin{aligned}    
p_I
&=\frac{\sum_{l=1}^{q}\Pr(\mathcal{G}=\mathcal{F}_l)}{1+\tan\alpha}
+
\frac{\sum_{j=1}^{m}\Pr(\mathcal{G}=\mathcal{E}_j)}{(1+\tan\alpha_1)(1+\tan\alpha_2)} \\
&\geq
\frac{1}{(1+\tan\alpha_1)(1+\tan\alpha_2)} \\
&\geq 1-\frac{3}{2}\lambda\delta = 1-\frac{3}{2r}.
\end{aligned}\end{equation}\label{eqn:p-I}
This is of the same scaling as in \cite{CW16}. By applying the same Chernoff bound analysis we conclude that $h=O\left(\frac{\log r}{\log\log r}\right)$ multiplexed unitary gates do not change the order of the error, which remains $O(1/r)$.

A key technical contribution of \cite{CW16} is a compression scheme suitable for circuits of the above kind. More precisely, for a probabilistic purification circuit with the same structure as Figure \ref{fig:channel-lcu-CW16} and satisfying the amplitude argument, there exists an encoding scheme $\mathcal{C}:(\mathcal{H}_\mathrm{sel}\otimes\mathcal{H}_\mathrm{ctrl})^{\otimes r}\to  \mathcal{H}_{r+1}^{\otimes h}\otimes(\mathcal{H}_\mathrm{sel}\otimes\mathcal{H}_\mathrm{ctrl})^{\otimes h}$ and an encoded rotation operator $E_\text{ro}$ such that
\begin{equation}\label{eqn:enc-error-general}
\left\|\mathcal{C}\left[(V_p(G\otimes I))\ket{0,0}^{\otimes r}\right]
-E_\text{ro}\mathcal{C}\left[\ket{0,0}^{\otimes r}\right]\right\|_2\leq\epsilon_{\mathrm{enc}}
\end{equation}
for some encoding error $\epsilon_{\mathrm{enc}}$. For different probabilistic purification circuits, only $E_\text{ro}$ would be different. In \cite{CW16}, the construction of $E_\text{ro}$ specified by the channel LCU circuit is described.

With the desired probabilistic purification circuit and efficient construction of $E_\text{ro}$ specified by this circuit, we can apply the technique in \cite{CW16} to achieve the same $O(t\mathrm{poly}\log(t/\epsilon))$ dependence.

% \sirui{...（C中第三段）文献6从这两个条件可以得到一个encoding scheme和encoded rotation operator，满足公式36。我们的方法也可以。}

\subsection{Encoded Rotation Operator}\label{sec:ero}

% \sirui{introducing construction-->demonstrate that the construction is valid (show that the $e_{enc}$ formula holds for our construction), since it shares the same structure as in \cite{CW16}}
In this subsection, we give a concrete implementation of the encoded rotation operator suitable for our purification circuit. 

We consider the constant-time simulation circuit $W$ in Figure \ref{fig:W-circuit}. The circuit $W$ is almost the same as $W_{k_1}$ in step \textbf{3b} of Algorithm \ref{alg:lind-sim}, except that the sampling process is replaced with a deterministic circuit in Figure \ref{fig:fragment-gadget-0226}.
It has $r$ gadget circuits and corresponding $r$ selection and control registers.

\begin{figure*}
    \centering
    \begin{tikzpicture}
    \begin{yquantgroup}
        \registers{
            qubit {} sel_reg_1;
            qubit {} ctrl_reg_1;
            qubit {} sel_reg_2;
            qubit {} ctrl_reg_2;
            qubit {} sel_reg_3;
            qubit {} ctrl_reg_3;
            qubit {} state_reg; 
        }
        \circuit{
            init {selection register} sel_reg_1;
            init {control register} ctrl_reg_1;
            init {selection register} sel_reg_2;
            init {control register} ctrl_reg_2;
            init {selection register} sel_reg_3;
            init {control register} ctrl_reg_3;
            init {state register} state_reg;
            slash sel_reg_1;
            slash ctrl_reg_1;
            slash sel_reg_2;
            slash ctrl_reg_2;
            slash sel_reg_3;
            slash ctrl_reg_3;
            slash state_reg;

            box {$G$} sel_reg_1,sel_reg_2,sel_reg_3;
            
            controlbox {binary-\\controlled} sel_reg_1;
            [shape=yquant-rectangle, rounded corners=.45em]
            box {$V_p$} ctrl_reg_1 | sel_reg_1;
            controlbox {binary-\\controlled} sel_reg_2;
            [shape=yquant-rectangle, rounded corners=.45em]
            box {$V_p$} ctrl_reg_2 | sel_reg_2;
            controlbox {binary-\\controlled} sel_reg_3;
            [shape=yquant-rectangle, rounded corners=.45em]
            box {$V_p$} ctrl_reg_3 | sel_reg_3;

            controlbox {binary-\\controlled} sel_reg_1, ctrl_reg_1;
            [shape=yquant-rectangle, rounded corners=.45em]
            box {$V_c$} (state_reg) | sel_reg_1, ctrl_reg_1;
            controlbox {binary-\\controlled} sel_reg_2, ctrl_reg_2;
            [shape=yquant-rectangle, rounded corners=.45em]
            box {$V_c$} (state_reg) | sel_reg_2, ctrl_reg_2;
            controlbox {binary-\\controlled} sel_reg_3, ctrl_reg_3;
            [shape=yquant-rectangle, rounded corners=.45em]
            box {$V_c$} (state_reg) | sel_reg_3, ctrl_reg_3;

            align sel_reg_1,sel_reg_2,sel_reg_3;
            controlbox {binary-\\controlled} sel_reg_1;
            [shape=yquant-rectangle, rounded corners=.45em]
            box {$V_p^\dagger$} ctrl_reg_1 | sel_reg_1;
            controlbox {binary-\\controlled} sel_reg_2;
            [shape=yquant-rectangle, rounded corners=.45em]
            box {$V_p^\dagger$} ctrl_reg_2 | sel_reg_2;
            controlbox {binary-\\controlled} sel_reg_3;
            [shape=yquant-rectangle, rounded corners=.45em]
            box {$V_p^\dagger$} ctrl_reg_3 | sel_reg_3;
        }
    \end{yquantgroup}
    \end{tikzpicture}
    \caption{The constant-time simulation circuit $W$. An $r=3$ example is presented.}
    \label{fig:W-circuit}
\end{figure*}
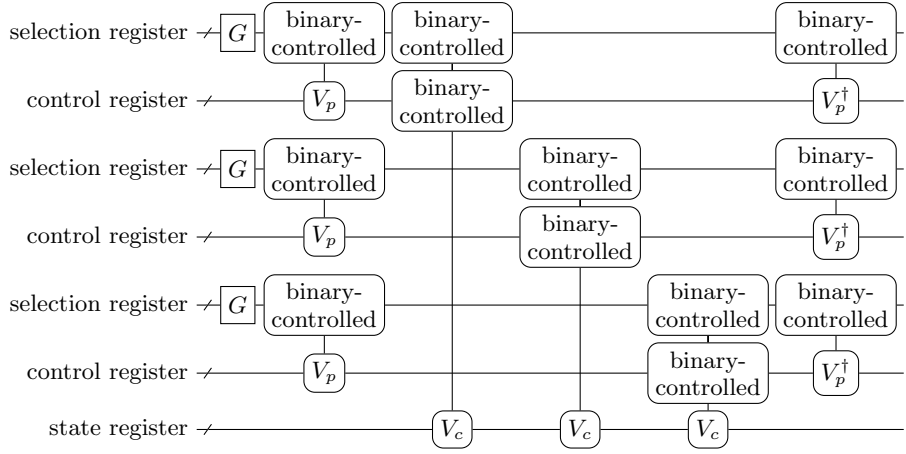

Consider the computational basis of all $r$ ancillary registers
$
\ket{l_1,k_1}\ldots \ket{l_r,k_r}
$, where $l_i$ and $k_i$ label the selection and control register in the $i$-th gadget, respectively. The encoding scheme \cite{CW16} compress this state to
$$
\ket{g_1,\ldots,g_h}\ket{l_{i_1},k_{i_1}}\ldots\ket{l_{i_h},k_{i_h}}
$$
where $\ket{l_{i_1},k_{i_1}}\ldots\ket{l_{i_h},k_{i_h}}$ records the first $h$ selection and control registers that is not in $s_I$, and $\ket{g_1,\ldots,g_h}$ encodes their position. We call $\ket{g_1,\ldots,g_h}$ position register. Here $h=O\left(\frac{\log r}{\log\log r}\right)$ is enough according to the Hamming-weight cutoff technique. The construction of $E_\text{ro}$ consists of two parts: a standard encoded rotation operator \cite{BCG14} on the position registers $\ket{g_1,\ldots,g_h}$, followed by modified preparation circuits $V'_p(G'\otimes I)$. We only need to compose the latter part according to our probabilistic purification circuit in Figure \ref{fig:fragment-gadget-0226}. %Define
% \begin{equation}\begin{aligned}
%     V_p(G\otimes I)\ket{0,0} = \frac{\sum_{l,k} x_{lk} \ket{l,k}}{\sqrt{\sum_{l,k} |x_{lk}|^2}},
% \end{aligned}\end{equation}
The modified version we need is
\begin{equation}\label{eqn:encoded-Vp-G}
    V'_p (G'\otimes I)\ket{0,0}
    = \frac{1}{\lambda_*}(I-P_I)V_p (G\otimes I)\ket{0,0}
\end{equation}
where $\lambda_*=\left\|(I-P_I)V_p (G\otimes I)\ket{0,0}\right\|_2$.

The modified version can be done efficiently.
Let $x_\theta = (1+\tan\theta)^{-1}$ for some $\theta$. The state we try to prepare is
\begin{equation}\begin{aligned}\label{eqn:non-trival-state}
&\sum_{l=1}^{q}
x_{l}\ket{l,0}_{\mathrm{sel}}\ket{\xi}_{\mathrm{ctrl}} \\
+&\sum_{j=1}^{m}\sum_{\substack{b_1,b_2\in\{0,1\} \\ b_1b_2\neq 0}}
x_{j,b_1,b_2}\ket{q+j,b_1}_{\mathrm{sel}}\ket{\nu_{j}}_{\mathrm{ctrl}}.
\end{aligned}\end{equation}    
where
\begin{equation}
\begin{aligned}
\ket{\xi}&=\ket{1,\beta_1/2,0,0}, \\
x_{l}&=\sqrt{\Pr(\mathcal{G}=\mathcal{F}_l)\frac{1-x_\alpha}{\lambda_*}},        
\end{aligned}
\end{equation}
for $l=1,\ldots,q$, and
\begin{equation}
\begin{aligned}
\ket{\nu_{j}}&=\ket{b_2,\beta_2/2,B_j,B_j},\\
x_{j,b_1,b_2}&=\sqrt{\Pr(\mathcal{G}=\mathcal{E}_j)\frac{(1-x_{\alpha_1})^{b_1}(1-x_{\alpha_2})^{b_2}}{\lambda_*}},
\end{aligned}
\end{equation}
for $j=1,\ldots,m,b_1=0,1,b_2=0,1$. Circuit $G'$ prepares the state $\sum_{l=1}^{q}
x_{l}\ket{l,0}_{\mathrm{sel}}+\sum_{j=1}^{m}\sum_{\substack{b_1,b_2\in\{0,1\} \\ b_1b_2\neq 0}}x_{j,b_1,b_2}\ket{q+j,b_1}_{\mathrm{sel}}$, and costs at most $O(m+q)$ elementary gate. In $V'_p$, controlled rotation gates from $V_p$ need modification, and the rest gates (controlled $B_j$ gates) stay the same. The resulting circuit is shown in Figure \ref{fig:encoded-ctrl-prep-circuit}, and its gate cost remains $O(mq)$.
\begin{figure}[htbp]
    \centering
    \begin{tikzpicture}
    \begin{yquant}
            qubit {} pos_reg[2];
            qubit {} anc_reg[2];

            init {position \\ registers} (pos_reg);
            % init {selection and control register 1} (anc_reg[0]);
            % init {selection and control register 2} (anc_reg[1]);
            slash pos_reg;
            slash anc_reg;

            box {$E$} (pos_reg);
            controlbox {$<r$} pos_reg[0];
            [shape=yquant-rectangle, rounded corners=.45em]
            box {$V'_p(G'\otimes I)$} anc_reg[0] | pos_reg[0];
            controlbox {$<r$} pos_reg[1];
            [shape=yquant-rectangle, rounded corners=.45em]
            box {$V'_p(G'\otimes I)$} anc_reg[1] | pos_reg[1];
    \end{yquant}
    \end{tikzpicture}
    \caption{The encoded rotation operator $E_\text{ro}$. It consists of two parts: a standard encoded rotation operator $E$ \cite{BCG14} on the position registers $\ket{g_1,\ldots,g_h}$, followed by modified preparation circuits $V'_p(G'\otimes I)$.  An $h=2$ example is shown.}
    \label{fig:ero}
\end{figure}
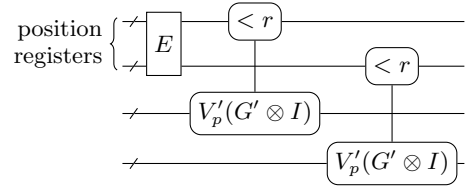
We also need a circuit $V'_c$ that corresponds to $V_c$ in the original representation. $V'_c$ has a more straightforward construction, by simply adding the condition $g_i<r$ to the original $V_c$.

\subsection{The Algorithm}\label{sec:ta}

\begin{figure}[h]
\begin{algorithm}[H]
\caption{Quantum algorithm for Lindbladian simulation with exponential precision improvement}\label{alg:lind-sim-encoded}
\begin{algorithmic}
\Require Lindbladian represented by Hamiltonian $H$ and jump operators $\{L_1,\ldots,L_m\}$ with Pauli string decomposition, simulation time $t$ and precision $\epsilon$
% \Ensure quantum circuit $V$
\begin{enumerate}
    \item Let  $\tau = O(\lambda t)$,  $r=2^{\lceil\log(\tau/\epsilon)\rceil}$, $\delta=t/(\tau r)$ and $h=O(\frac{\log r}{\log\log r})$.
    \item Initialize the position register, selection registers, and control registers to $\ket{r,\ldots,r}(\ket{0}\ket{0})^{\otimes h}$. The position register is composed of $h$ subregisters and each has $\log r$ qubits. Each selection register has $\lceil\log (m+q)\rceil + 1$ qubit and each control register has $2\lceil\log q\rceil+2$ qubits. 
    \item For $k_1=1,\ldots,\tau$,
    \begin{enumerate}
        \item Carry out the encoded rotation operator $E_\text{ro}$.
        \item For $k_2=1,\ldots,h$,
        \begin{enumerate}
            \item Carry out  $V'_c$ controlled on the $k_2$-th position subregister, $k_2$-th selection register, the $k_2$-th control register.  
        \end{enumerate}
        \item Carry out the encoded rotation operator $E_\text{ro}^\dagger$.
        \item Reflect about the subspace where the state of the position register is $\ket{r}^{\otimes h}$ and the state of all control registers is $\ket{0}$.
        \item Carry out the encoded rotation operator $E_\text{ro}$.
        \item Carry out the inverse of the circuit created in step \textbf{3b}.
        \item Carry out the encoded rotation operator $E_\text{ro}^\dagger$.
        \item Reflect about the subspace where the state of the position register is $\ket{r}^{\otimes h}$ and the state of all selection and control registers is $\ket{0}$.
        \item Carry out the circuit created from step \textbf{3b}.
        \item Carry out the encoded rotation operator $E_\text{ro}$.
    \end{enumerate}
    \item Output the result circuit $V$.
\end{enumerate}
\end{algorithmic}
\end{algorithm}
\end{figure}

% \subsection{Circuit Size}

In summation, we propose Algorithm \ref{alg:lind-sim-encoded}, which realizes an exponential speedup in the precision parameter compared to Algorithm \ref{alg:lind-sim}, at the cost of an extra $O(m)$ factor in gate complexity introduced by a multiplexed operation on $U_{\mathcal{G}_j,p},j=1,\ldots,m+q$.

For analysis of the total gate count, we consider the circuit for simulating a single segment, i.e. the circuit created from one loop of step \textbf{3}. It consists of $O(1)$ encoded rotation operators $E_\text{ro}$, $O(h)$ multiplexed unitary gates $V'_c$, and two reflection operators.
As depicted in Figure \ref{fig:ero}, the first part of $E_\text{ro}$ can be implemented with $O(h(\log r+\log\log \epsilon_2^{-1}))$ elementary gates \cite{BCG14}, where $\epsilon_2$ is the error introduced by an approximation of the encoding scheme and we take $\epsilon_2=O(1/r)$. The second part consists of $h$ copies of $V'_p$ and $G'$ and each costs $O(mq)$ elementary gates. Therefore, the cost of the encoded rotation operator $E_\text{ro}$ is $O(h(\log r + mq))$.
Multiplexed unitary gate $V'_c$ consists of $O(mq)$ Pauli strings controlled on $O(\log r + \log (mq))$ qubits, and their cost is $O(mq(n + \log r + \log (mq)))$.
The reflection operators can be viewed as $Z$ gate controlled on at most $O(h(\log r + \log mq))$ qubits and their cost is $O(h(\log r + \log mq))$.
Therefore, the circuit for simulating one segment has a cost of
$
      O(h(\log r + mq) + mqh(n + \log r + \log (mq)) + h(\log r + \log mq)) 
    = O(mqh(\log r + n + \log(mq))).
$
There are $\tau$ segments and $h=O(\frac{\log r}{\log\log r}), r=O(\tau/\epsilon)$, and the gate complexity of $V$ is
\begin{equation}\begin{aligned}
    O\left( mq\tau \frac{(\log (mq\tau/\epsilon) + n)\log (\tau/\epsilon)}{\log\log (\tau/\epsilon)}\right),
\end{aligned}\end{equation}
which leads to Theorem~\ref{thm:texp}.
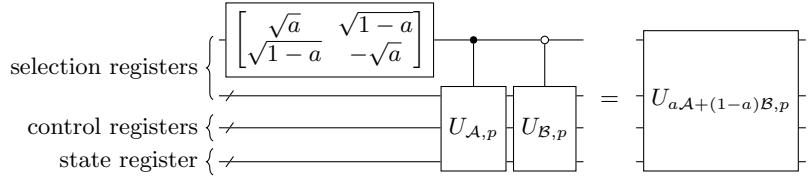
\begin{figure*}[p]
    \centering
    \begin{tikzpicture}
    \begin{yquantgroup}
        \registers{
            qubit {} switch_reg;
            qubit {} sel_reg;
            qubit {} control_reg;
            qubit {} state_reg; 
        }
        \circuit{
            init {selection registers} (switch_reg, sel_reg);
            init {control registers} (control_reg);
            init {state register} (state_reg);
            slash sel_reg;
            slash control_reg;
            slash state_reg;

            box {$\begin{bmatrix}
                \sqrt{a}   &  \sqrt{1-a} \\
                \sqrt{1-a} & -\sqrt{a}                
            \end{bmatrix}$} switch_reg;
            box {$U_{\mathcal{A}, p}$} (state_reg, sel_reg,control_reg) | switch_reg;
            box {$U_{\mathcal{B}, p}$} (state_reg, sel_reg,control_reg) ~ switch_reg;
        }
        \equals
        \circuit{
            box {$U_{a\mathcal{A}+(1-a)\mathcal{B}, p}$} (state_reg, switch_reg,sel_reg,control_reg);
        }
    \end{yquantgroup}
    \end{tikzpicture}
    \caption{Probabilistic purification circuit for the mixed channel $a\mathcal{A}+(1-a)\mathcal{B}$ for $a\in[0,1]$, given individual channels' probabilistic purification with the same probability.}
    \label{fig:mixed-channel}
\end{figure*}

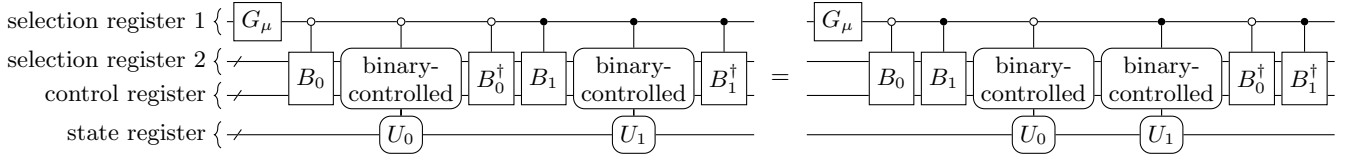
\begin{figure*}[p]
    \centering
    \begin{tikzpicture}
    \begin{yquantgroup}
        \registers{
            qubit {} sel_reg_1;
            qubit {} sel_reg_2;
            qubit {} ctrl_reg;
            qubit {} state_reg; 
        }
        \circuit{
            init {selection register 1} (sel_reg_1);
            init {selection register 2} (sel_reg_2);
            init {control register} (ctrl_reg);
            init {state register} (state_reg);
            slash sel_reg_2;
            slash ctrl_reg;
            slash state_reg;

            box {$G_\mu$} sel_reg_1;
            
            box {$B_0$} (sel_reg_2,ctrl_reg) ~ sel_reg_1;
            controlbox {binary-\\controlled} (sel_reg_2,ctrl_reg);
            [shape=yquant-rectangle, rounded corners=.45em]
            box {$U_0$} (state_reg) | sel_reg_2,ctrl_reg  ~ sel_reg_1;
            box {$B_0^\dagger$} (sel_reg_2,ctrl_reg) ~ sel_reg_1;
            
            box {$B_1$} (sel_reg_2,ctrl_reg) | sel_reg_1;
            controlbox {binary-\\controlled} (sel_reg_2,ctrl_reg);
            [shape=yquant-rectangle, rounded corners=.45em]
            box {$U_1$} (state_reg) | sel_reg_2,ctrl_reg,sel_reg_1;
            box {$B_1^\dagger$} (sel_reg_2,ctrl_reg) | sel_reg_1;
        }
        \equals
        \circuit{
            box {$G_\mu$} sel_reg_1;
            
            box {$B_0$} (sel_reg_2,ctrl_reg) ~ sel_reg_1;
            box {$B_1$} (sel_reg_2,ctrl_reg) | sel_reg_1;
            controlbox {binary-\\controlled} (sel_reg_2,ctrl_reg);
            [shape=yquant-rectangle, rounded corners=.45em]
            box {$U_0$} (state_reg) | sel_reg_2,ctrl_reg  ~ sel_reg_1;
            controlbox {binary-\\controlled} (sel_reg_2,ctrl_reg);
            [shape=yquant-rectangle, rounded corners=.45em]
            box {$U_1$} (state_reg) | sel_reg_2,ctrl_reg,sel_reg_1;
            box {$B_0^\dagger$} (sel_reg_2,ctrl_reg) ~ sel_reg_1;
            box {$B_1^\dagger$} (sel_reg_2,ctrl_reg) | sel_reg_1;        
        }
    \end{yquantgroup}
    \end{tikzpicture}
    \caption{Probabilistic purification circuit for the mixed channel can be rearranged to maintain the structure of binary-controlled unitary matrices.}
    \label{fig:order-rearrange}
\end{figure*}

\begin{figure*}[p]
    \centering
    \begin{tikzpicture}
    \begin{yquantgroup}
        \registers{
            qubit {} ctrl_reg_1;
            qubit {} ctrl_reg_2;
            qubit {} ctrl_reg_3;
            qubit {} state_reg; 
        }
        \circuit{
            init {control registers} (ctrl_reg_1,ctrl_reg_2,ctrl_reg_3);
            init {state register} (state_reg);
            slash state_reg;

            box {$U$} state_reg ~ ctrl_reg_1,ctrl_reg_2,ctrl_reg_3; % 000
            box {$U$} state_reg | ctrl_reg_1 ~ ctrl_reg_2,ctrl_reg_3; % 001
            box {$U$} state_reg | ctrl_reg_2 ~ ctrl_reg_1,ctrl_reg_3; % 010
        }
        \equals
        \circuit{
            box {$U$} state_reg ~ ctrl_reg_2,ctrl_reg_3; % 00
            box {$U$} state_reg | ctrl_reg_2 ~ ctrl_reg_1,ctrl_reg_3; % 010
        }
        \equals
        \circuit{
            controlbox {$U$} state_reg;
            [shape=yquant-rectangle, rounded corners=.45em]
            box {$0\leq x\leq 2$} (ctrl_reg_1,ctrl_reg_2,ctrl_reg_3) | state_reg;
        }
    \end{yquantgroup}
    \end{tikzpicture}
    \caption{Gates of the same kind but controlled on different conditions can be combined.}
    \label{fig:combine}
\end{figure*}
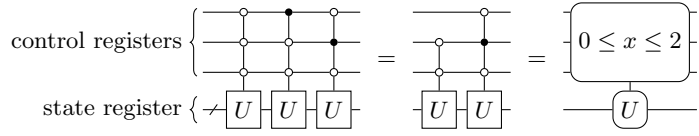

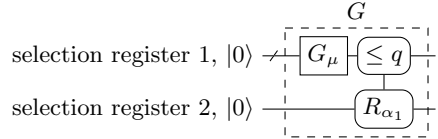
\begin{figure*}[p]
    \centering
    \begin{tikzpicture}
    \begin{yquant}[register/separation=2mm]
            qubit {selection register 1, $\ket{0}$} sel_reg_idx;
            qubit {selection register 2, $\ket{0}$} sel_reg;

            slash sel_reg_idx;

            % G gate
            [this subcircuit box style={dashed, "$G$"}]
            subcircuit {
            qubit {} sel_reg_idx;
            qubit {} sel_reg;
            box {$G_\mu$} sel_reg_idx;
            controlbox {$R_{\alpha_1}$} (sel_reg);
            [shape=yquant-rectangle, rounded corners=.45em]
            box {${\leq q}$} sel_reg_idx | sel_reg;
            } (sel_reg_idx, sel_reg);

    \end{yquant}
    \end{tikzpicture}
    \caption{Selction preparation circuit $G$, which prepares probability distribution.}
    \label{fig:sel-prep-circuit}
\end{figure*}

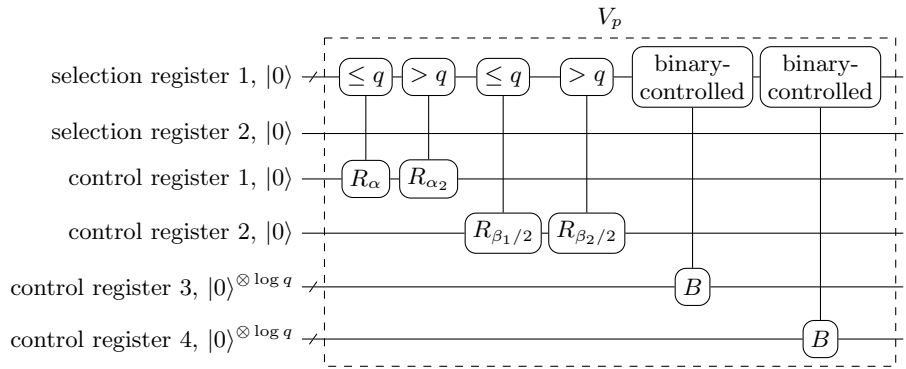
\begin{figure*}[p]
    \centering
    \begin{tikzpicture}
    \begin{yquant}[register/separation=2mm]
            qubit {selection register 1, $\ket{0}$} sel_reg_idx;
            qubit {selection register 2, $\ket{0}$} sel_reg;
            qubit {control register 1, $\ket{0}$} control_reg;
            qubit {control register 2, $\ket{0}$} control_reg_extra;
            qubit {control register 3, $\ket{0}^{\otimes \log q}$} anc1;
            qubit {control register 4, $\ket{0}^{\otimes \log q}$} anc2;

            slash sel_reg_idx;
            slash anc1;
            slash anc2;

            [this subcircuit box style={dashed, "$V_p$"}]
            subcircuit {
            qubit {} sel_reg_idx;
            qubit {} sel_reg;
            qubit {} control_reg;
            qubit {} control_reg_extra;
            qubit {} anc1;
            qubit {} anc2;

            controlbox {$R_{\alpha}$} (control_reg);
            [shape=yquant-rectangle, rounded corners=.45em]
            box {${\leq q}$} sel_reg_idx | control_reg;
            controlbox {$R_{\alpha_2}$} (control_reg);
            [shape=yquant-rectangle, rounded corners=.45em]
            box {${>q}$} sel_reg_idx | control_reg;

            controlbox {$R_{\beta_1/2}$} (control_reg_extra);
            [shape=yquant-rectangle, rounded corners=.45em]
            box {${\leq q}$} sel_reg_idx | control_reg_extra;
            controlbox {$R_{\beta_2/2}$} (control_reg_extra);
            [shape=yquant-rectangle, rounded corners=.45em]
            box {${>q}$} sel_reg_idx | control_reg_extra;
            
            controlbox {$B$} (anc1);
            [shape=yquant-rectangle, rounded corners=.45em]
            box {binary-\\controlled} sel_reg_idx | anc1;
            controlbox {$B$} (anc2);
            [shape=yquant-rectangle, rounded corners=.45em]
            box {binary-\\controlled} sel_reg_idx | anc2;
            
            } (sel_reg_idx, sel_reg, control_reg, control_reg_extra, anc1, anc2);

    \end{yquant}
    \end{tikzpicture}
    \caption{Control preparation circuit $V_p$ for mixing linear coefficients.}
    \label{fig:ctrl-prep-circuit}
\end{figure*}

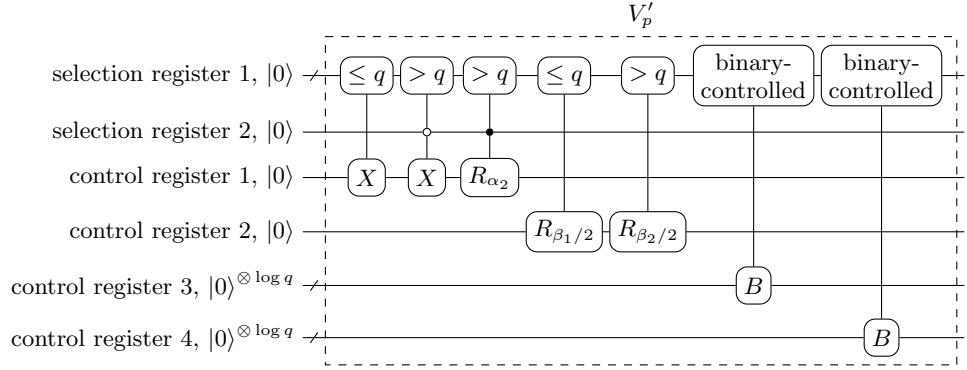
\begin{figure*}[p]
    \centering
    \begin{tikzpicture}
    \begin{yquant}[register/separation=2mm]
            qubit {selection register 1, $\ket{0}$} sel_reg_idx;
            qubit {selection register 2, $\ket{0}$} sel_reg;
            qubit {control register 1, $\ket{0}$} control_reg;
            qubit {control register 2, $\ket{0}$} control_reg_extra;
            qubit {control register 3, $\ket{0}^{\otimes \log q}$} anc1;
            qubit {control register 4, $\ket{0}^{\otimes \log q}$} anc2;

            slash sel_reg_idx;
            slash anc1;
            slash anc2;

            [this subcircuit box style={dashed, "$V'_p$"}]
            subcircuit {
            qubit {} sel_reg_idx;
            qubit {} sel_reg;
            qubit {} control_reg;
            qubit {} control_reg_extra;
            qubit {} anc1;
            qubit {} anc2;

            controlbox {$X$} (control_reg);
            [shape=yquant-rectangle, rounded corners=.45em]
            box {${\leq q}$} sel_reg_idx | control_reg;
            controlbox {$X$} (control_reg);
            [shape=yquant-rectangle, rounded corners=.45em]
            box {${>q}$} sel_reg_idx | control_reg~ sel_reg;
            controlbox {$R_{\alpha_2}$} (control_reg);
            [shape=yquant-rectangle, rounded corners=.45em]
            box {${>q}$} sel_reg_idx | control_reg, sel_reg;

            controlbox {$R_{\beta_1/2}$} (control_reg_extra);
            [shape=yquant-rectangle, rounded corners=.45em]
            box {${\leq q}$} sel_reg_idx | control_reg_extra;
            controlbox {$R_{\beta_2/2}$} (control_reg_extra);
            [shape=yquant-rectangle, rounded corners=.45em]
            box {${>q}$} sel_reg_idx | control_reg_extra;
            
            controlbox {$B$} (anc1);
            [shape=yquant-rectangle, rounded corners=.45em]
            box {binary-\\controlled} sel_reg_idx | anc1;
            controlbox {$B$} (anc2);
            [shape=yquant-rectangle, rounded corners=.45em]
            box {binary-\\controlled} sel_reg_idx | anc2;
            
            } (sel_reg_idx, sel_reg, control_reg, control_reg_extra, anc1, anc2);

    \end{yquant}
    \end{tikzpicture}
    \caption{Encoded control preparation circuit $V'_p$.}
    \label{fig:encoded-ctrl-prep-circuit}
\end{figure*}

\begin{figure*}[p]
    \centering
    \begin{tikzpicture}
    \begin{yquant}[register/separation=2mm]
            qubit {selection register 1, $\ket{0}$} sel_reg_idx;
            qubit {selection register 2, $\ket{0}$} sel_reg;
            qubit {control register 1, $\ket{0}$} control_reg;
            qubit {control register 2, $\ket{0}$} control_reg_extra;
            qubit {control register 3, $\ket{0}^{\otimes \log q}$} anc1;
            qubit {control register 4, $\ket{0}^{\otimes \log q}$} anc2;
            qubit {state register, $\ket{\psi}$} state_reg;

            slash sel_reg_idx;
            slash anc1;
            slash anc2;
            slash state_reg;

            % G gate
            box {$G$} (sel_reg_idx, sel_reg);
            
            % V_p gate
            box {$V_p$} (sel_reg_idx, sel_reg, control_reg, control_reg_extra, anc1, anc2);
            
            % unitaries
            [this subcircuit box style={dashed, "$V_c$"}]
            subcircuit {
            qubit {} sel_reg_idx;
            qubit {} sel_reg;
            qubit {} control_reg;
            qubit {} anc1;
            qubit {} anc2;
            qubit {} state_reg;
            controlbox {$-iH$} state_reg;
            [shape=yquant-rectangle, rounded corners=.45em]
            box {binary-\\controlled} sel_reg_idx | state_reg,control_reg;
            controlbox {$U$} state_reg;
            [shape=yquant-rectangle, rounded corners=.45em]
            box {binary-\\controlled} anc2, sel_reg_idx | state_reg,sel_reg;
            controlbox {$-U$} state_reg;
            [shape=yquant-rectangle, rounded corners=.45em]
            box {binary-\\controlled} anc2, sel_reg_idx | state_reg,control_reg ~ sel_reg;
            controlbox {$U^\dagger$} state_reg;
            [shape=yquant-rectangle, rounded corners=.45em]
            box {binary-\\controlled} anc1, sel_reg_idx | state_reg,control_reg ~ sel_reg;
            } (sel_reg_idx, sel_reg, control_reg, anc1, anc2, state_reg);
            
            % V_p^\dagger gate
            box {$V_p^\dagger$} (sel_reg_idx, sel_reg, control_reg, control_reg_extra, anc1, anc2);
    \end{yquant}
    \end{tikzpicture}
    \caption{Probabilistic purification circuit for the mixed channel $\mathcal{E}$ suitable for Hamming-weight cutoff.}
    \label{fig:fragment-gadget-0226}
\end{figure*}
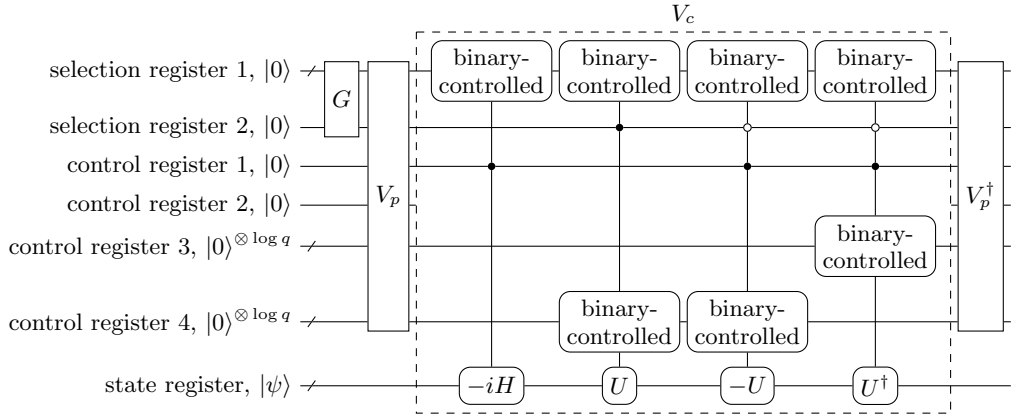

\section{Examples}

% \sirui{add Hamiltonians to the depolarzing channel}
The number of jump operators $m$ in $\mathcal{L}$ of an $n$-qubit system can be up to $4^n$, and therefore the removal of $m$-dependence can be an exponential improvement. We consider two examples to demonstrate such improvements. In the first example, the dissipation is given by a depolarizing channel, and the Hamiltonian has a constant Pauli norm and can be expressed as the sum of $\mathrm{poly}(n)$ Pauli strings. The other example is the evolution of the dissipative quantum $XY$ model \cite{Liu2024SimulationOO}. In both scenarios, we obtain better results than the previous best gate-based algorithm \cite{CW16}.

For scenario 1 listed in Table \ref{tab:comparison-two-scenario}, we first consider the simulation of the depolarizing channel. Consider the following state
\begin{equation}
    \rho(t) = \Delta_{\lambda(t)}(\rho(0))=(1-\lambda(t))\rho(0) + \lambda(t)\frac{I}{2^n},
\end{equation}
where $\lambda(t)=1-e^{-t}$. By setting different $t$, we are able to simulate $\Delta_\lambda,\lambda\in[0,1)$ via simulating $\rho(t)$. We assert that $\rho(t)$ can be derived from $\rho(0)$ via Lindbladian simulation.
Notice that
\begin{equation}
\begin{aligned}
    \frac{d\rho(t)}{d t}
   & = -\rho(t) + \frac{I}{2^n} \\
    &= 4^{-n}\sum_{P\in\textrm{Pauli}(n)}
        \left( P\rho(t) P^\dagger - \frac{1}{2} \{P^\dagger P,\rho(t)\} \right) \\
    &= \mathcal{L}_0(\rho(t)),
\end{aligned}
\end{equation}
where $\mathrm{Pauli}(n) = \{I,X,Y,Z\}^{\otimes n}$. This suggests $\rho(t) = e^{\mathcal{L}_0 t}\rho(0)$.
Then we combine it with the given Hamiltonian $H$,
% \scriptsize
\begin{equation}
\begin{aligned}
    &\frac{d\rho(t)}{d t}  \\
    =& -i[H,\rho] + 4^{-n}\sum_{P\in\textrm{Pauli}(n)}
    \left( P\rho(t) P^\dagger - \frac{1}{2} \{P^\dagger P,\rho(t)\} \right) \\
    =& \mathcal{L}(\rho(t)),
\end{aligned}
\end{equation}
% \normalsize
and obtain a Lindbladian $\mathcal{L}$ with a Hamiltonian and $4^n$ jump operators $\left\{\frac{P}{2^n}|P\in\mathrm{Pauli}(n)\right\}$.
In this scenario, the Lindbladian has the following parameters,
\begin{equation}\label{eqn:model-dhs}
\begin{aligned}
m&=4^n, \\
q&=\mathrm{poly}(n),q_0=1 \\
\|\mathcal{L}\|_\mathrm{pauli}&=O(1)+4^n(2^{-n})^2=O(1).
\end{aligned}
\end{equation}
where $\mathcal{L}_0$ can be described as linear combination of $q_0$ Pauli strings. We simulate the Lindbladian with $t=\ln\left(\frac{1}{1-\lambda}\right) ,\epsilon=O(1)$.

In scenario 2, we consider a Lindbladian with Hamiltonian $H=-J\sum_{(i,j)\in E}(X_i X_j + Y_i Y_j)$ and $n$ jump operators $\{Z_1,Z_2,\ldots,Z_n\}$. By fully connected topology we refer to $E=\{(i,j)|1\leq i<j\leq n\}$. We also set interaction strength to $J=-1$.
The parameters of the Lindbladian are
\begin{equation}\label{eqn:model-XY}
\begin{aligned}
    m&=n, \\
    q&=\frac{n(n-1)}{2}, \\
    \|\mathcal{L}\|_\mathrm{pauli} &= n(n-1)+n = n^2.
\end{aligned}
\end{equation}
We simulate it with $ t=n,\epsilon=O(1)$.

By substituting parameters in Theorem \ref{thm:t2} and \ref{thm:texp}, we obtain results in Table \ref{tab:comparison-two-scenario}. While the table provides a comparison of complexity in asymptotic limit, Appendix \ref{sec:resource-estimation} includes a detailed analysis with specific gate counts and runtime scaling. Appendix \ref{sec:numerical-sim} supports this analysis through numerical simulations of our algorithms for Scenario 1 and 2, demonstrating strong alignment with exact solutions.

\begin{table}[htbp]
    \centering
    \begin{ruledtabular}        
    \begin{tabular}{ccc}
         Algorithm & Scenario 1 &  Scenario 2\\
        \hline
        Algorithm \ref{alg:lind-sim}& $O\left(n\right)$&$O(n^7)$\\
        Algorithm \ref{alg:lind-sim-encoded}& $O\left(\mathrm{poly}(n) 2^{2n}\right)$&$O(n^6\log n)$\\
        Channel LCU scheme \cite{CW16}& $O\left(\mathrm{poly}(n) 2^{4n}\right)$&$O(n^9\log n)$\\
    \end{tabular}
    \end{ruledtabular}
    \caption{Comparison of previous best gate-based algorithm \cite{CW16} for Lindbladian simulation and our algorithms. In the first scenario, we consider "depolarized" Hamiltonian simulation on $n$ qubits, with constant precision $\epsilon$ and constant depolarizing rate $\lambda$. In the second scenario, we consider the simulation of the $n$-qubit dissipative quantum $XY$ model (as defined in \cite{Liu2024SimulationOO}) with full connection topology, interaction strength $J=-1$, time $t=n$ and constant precision.  Notice that for scenario 1, we use a strengthened version of Theorem \ref{thm:t2}, where the gate count can be reduced to $O\left(\frac{q_0 n \tau^2}{\epsilon}\right)$, where $q_0$ is the maximum Pauli string number for jump operators only.}
    \label{tab:comparison-two-scenario}
\end{table}

If we consider interaction strength of jump operators, denoted as $\{\gamma_1\geq 0,\ldots,\gamma_m\geq 0\}$, then the formulation of LME \eqref{eqn:lme} becomes
\small
\begin{equation}
\frac{d \rho(t)}{d t}
= -i[H, \rho(t)]+\sum_{\alpha=1}^m \gamma_\alpha\left(L_\alpha \rho(t) L_\alpha^{\dagger}-\frac{1}{2}\left\{L_\alpha^{\dagger } L_\alpha, \rho(t)\right\}\right).
\end{equation}
\normalsize
By a substitution $L_j\to\sqrt{\gamma_j} L_j$, Theorem \ref{thm:t2} and \ref{thm:texp} still hold with the modified definition of $\paulinorm{\mL}$:
\begin{equation}
\paulinorm{\mL}=\paulinorm{H}+\sum_{j=1}^m \gamma_j \paulinorm{L_j}^2.
\end{equation}
In the definition above, the contribution of interaction strength to $\paulinorm{\mL}$ is linear, which means that for a given Lindbladian, Algorithms' dependence on the interaction strength corresponds to their dependence on $\paulinorm{\mL}$ (Algorithm \ref{alg:lind-sim} has a quadratic dependence, while Algorithm \ref{alg:lind-sim-encoded} has a linear dependence, up to polylogarithmic factors).

\section{Discussion and Conclusion}\label{sec:discussion}
In this paper, we have proposed two quantum algorithms for simulating Lindbladian evolution. Our key contribution is the introduction of a new approximation channel inspired by the quantum trajectory method, as well as the structured linear combination of unitaries method. Using the combination of the two techniques, we propose Algorithm \ref{alg:lind-sim}, which remains efficient even with an arbitrarily large number of jump operators. Algorithm \ref{alg:lind-sim-encoded} further incorporates the fractional query method, achieving a near-optimal $(t,\epsilon)$-dependence at the cost of an $(m,q)$-dependence of $\tilde O(mq)$. Despite this, Algorithm \ref{alg:lind-sim-encoded} outperforms previous algorithms that are based on linear combinations of Pauli operators \cite{CW16}, which have the same $(t,\epsilon)$-dependence but an $(m,q)$-dependence of $\tilde O(m^2q^2)$. Our result also improves the series truncation method \cite{LW23}, when the block-encoding part in \cite{LW23} is implemented using the LCU method. The series truncation method has the same $(t,\epsilon)$-dependence (up to double-log factor) but an $(m,q)$-dependence of $\tilde O(m^2 + mq)$. Our algorithm allows simulation with more jump operators and achieves exponentially higher precision.

The speedup on $m$-dependence in Algorithm \ref{alg:lind-sim-encoded} is achieved through the integration of the new approximation channel and Theorem \ref{thm:sLCU}.
The primary technical challenge lies in restructuring the circuits used for short-term simulation, and combining the them with the Hamming-weight cutoff technique \cite{CGMSY,BCG14,Kothari-phd-thesis}, as illustrated in Figure \ref{fig:fragment-gadget-0226}. The encoded rotation operator's construction is intricately linked to the circuit structure for short-term simulation, preventing direct application of the method from \cite{CW16}. Nevertheless, we have successfully extended the construction to attain the near-optimal $(t, \epsilon)$-dependence in Algorithm \ref{alg:lind-sim-encoded}.

An important application of our algorithms lies in Hamiltonian eigenstate simulation, particularly ground state preparation, enabled by recent advances in engineering Lindbladians for dissipation-driven cooling \cite{zhan2025rapidquantumgroundstate,prr.6.033147}.
For the general spin system with bulk dissipation mentioned in \cite{zhan2025rapidquantumgroundstate}, the Lindbladian involves a large number of simple jump operators $m=O(N)$, where $N$ is the system size, but requires a very short mixing time, $t_\text{mix}\sim O(\log N)$, to converge to the ground state. In this scenario, Algorithm \ref{alg:lind-sim} offers a substantial (potentially exponential in qubit number) speedup.  Its $m$-independent runtime enables ground state preparation in polylogarithmic circuit depth, demonstrating this specific problem resides not only in QMA but also in QNC. In contrast, previous algorithms with a polynomial dependence on $m$ would require $\Omega(N)$-depth circuits.  Our results do not contradict the QMA-hardness of general local Hamiltonian problems: "hard" Hamiltonians with exponentially small gaps \cite{cmp} exhibit exponentially long mixing times, where the no-fast-forwarding theorem \cite{CL16-QIC,CK10-QIC} forbids subexponential simulation costs for any Lindbladian-based method. Furthermore, the cost of our algorithms depends on both the initial fidelity with the target state, $F_0=F(\rho_0,\sigma)$, and the required final infidelity $\eta$. Intuitively, a lower $F_0$ necessitates longer evolution to amplify the target state component. To achieve a target fidelity of $F(\rho,\sigma)\geq 1-\eta$, the gate complexity of Algorithm 1 scales as $O( ( \log( (1-F_0)/\eta ) )^2 \eta^{-1} )$ while that of Algorithm 2 scales as $O( \log( (1-F_0)/\eta ) \log(\eta^{-1}) )$, ignoring sublogarithmic factors. The logarithmic dependence on $1-F_0$ in both algorithms originates from the mixing time property: each $t_{\text{mix}}$ evolution period halves the trace distance to the target state.

Several open questions remain. It would be interesting to explore the possibility of combining the merits of Algorithms \ref{alg:lind-sim} and \ref{alg:lind-sim-encoded}. Specifically, we wonder if it is possible to develop a quantum algorithm for Lindbladian simulation that achieves an $\epsilon$-dependence of $O(\log\epsilon^{-1})$ while being independent of $m$ concerning gate complexity.
An intriguing variant to consider is the trade-off between the fractional query approach and the sampling approach, which benefits in cases with moderate values of $m$.
Additionally, establishing a lower bound on the $(m,q,t,\epsilon)$-dependence for Lindbladian simulation or the $(q,t,\epsilon)$-dependence for Hamiltonian simulation would be of significant interest. 

An extension of quantum trajectory method has recently been proposed \cite{stochastic-bundling}, it would be interesting to further combine the extended method with the structured LCU method proposed in this paper, and investigate the performance of the resulting quantum algorithm.

We also look forward to applications of structured LCU method (Theorem \ref{thm:sLCU}) in other tasks related to the digital implementation of quantum channels.

\paragraph*{Note} Recently, we discovered some contemporary work \cite{chen2025randomizedmethodsimulatinglindblad,david2024fasterquantumsimulationmarkovian} that yields results similar to our approximation lemma. In \cite{chen2025randomizedmethodsimulatinglindblad}, the authors provide a detailed error analysis for both the average channel (a mixture of all random realizations, similar to our approximation lemma) and the random channel (a single random realization), under different metrics. In \cite{david2024fasterquantumsimulationmarkovian}, the authors analyze the extension of multiple randomized methods for open quantum system simulation, including the randomized Trotter-Suzuki formula \cite{random-trotter} and qDRIFT \cite{qDRIFT} (which is similar to our approximation lemma). It remains for future work to explore combination of these approximation scheme with our circuit construction techniques.

\begin{acknowledgements}
We gratefully acknowledge Benjamin Desef, the author of the \texttt{yquant} LaTeX package, for providing instructions on drawing specific complex quantum gates.
We thank Chunhao Wang, Wenjun Yu for insightful discussions.
This work was supported in part by the National Natural Science Foundation of China Grants No. 62325210, 92465202, and 12204489. 
Q.Z. acknowledges funding from Innovation Program for Quantum Science and Technology via Project 2024ZD0301900, National Natural Science Foundation of China (NSFC) via Project No. 12347104 and No. 12305030, Guangdong Basic and Applied Basic Research Foundation via Project 2023A1515012185, Hong Kong Research Grant Council (RGC) via No. 27300823, N\_HKU718/23, and R6010-23, Guangdong Provincial Quantum Science Strategic Initiative No. GDZX2303007, HKU Seed Fund for Basic Research for New Staff via Project 2201100596. 

%12447107, 62272441, 12204489, 62301531, and Beijing Natural Science Foundation No. 4252013, 4252012.
% This work was supported in part by the National Natural Science Foundation of China Grants No. 62325210, 62272441, 12204489, 62301531, and the Strategic Priority Research Program of Chinese Academy of Sciences Grant No. XDB28000000.
    \end{acknowledgements}

\clearpage
\bibliography{sample.bib}
\clearpage

\onecolumngrid
\appendix
\section{Resource Estimation}
\label{sec:resource-estimation}
In this section, we provide resource estimation for various Lindbladian simulation algorithms including Algorithm \ref{alg:lind-sim}, Algorithm \ref{alg:lind-sim-encoded}, and the algorithm in \cite{CW16}. We analyze the constants behind the big-O notation for the algorithms.We illustrate in Figure \ref{fig:comparison-scenario-1} and \ref{fig:comparison-scenario-2} how their gate counts vary with time on the target system mentioned in Table \ref{tab:comparison-two-scenario}.

In the following analysis, we use the following results:
\begin{itemize}
    \item Multi-controlled elementary gate (or extra control bit) costs at most $2n-1$ elementary gates where $n$ is the number of control qubits. \cite{QCQI, nie2024quantumcircuitmultiqubittoffoli}
    \item Real state preparation costs at most $2^n(2n-4)+4$ elementary gates, i.e. at most $2q\log q$ for $B_j$ and $2(m+q)\log (m+q)$ for $G_\mu$. \cite{real-state-prep-2002, state-prep}
    \item $n$-qubit quantum adder is available in $n\log n$ elementary gates. \cite{draper}
\end{itemize}

For Algorithm \ref{alg:lind-sim}, we need to specify the exact value of $\tau$ to know the constant. We accompany the result of Lemma \ref{lem:4} with more details. 
Let $\ket{\Psi'}=F\ket{\Psi}$, then
\begin{equation}
\begin{aligned}
\tracenorm{\mM_{u_1} - \prod_{j=r}^1\mG_{s_{1j}}}
&=\max_{\ket{\psi}} \tracenorm{\mM_{u_1}[\ket{\psi}\bra{\psi}] - \prod_{j=r}^1\mG_{s_{1j}}[\ket{\psi}\bra{\psi}]} \\    
&\leq \tracenorm{\ket{\Psi'}\bra{\Psi'} -\ket{\Phi}\bra{\Phi}} \\
& = 2\sqrt{1 - |\braket{\Psi'|\Phi}|^2}\\
&\leq 2\opnorm{\ket{\Psi'} - \ket{\Phi}}_2\leq 2r(\delta\paulinorm{\mL})^2.
\end{aligned}
\end{equation}
For the second line, notice that partial trace is a CPTP map (which maps $\ket{\Psi'}$ to $\mM_{u_1}[\ket{\psi}\bra{\psi}]$ and $\ket{\Phi}$ to $\prod_{j=r}^1\mG_{s_{1j}}[\ket{\psi}\bra{\psi}]$) and thus is contractive.
For the third line, we use the fact that the trace distance between the pure states is related to their fidelity e.g. see \cite[Theorem 9.3.1]{wilde-qit-textbook}.
% The fourth line holds with the condition $|\textrm{Re} \braket{\Psi'|\Phi}|\leq 1/2$.
Then
\begin{equation}
\begin{aligned}
\tracenorm{\mM-\mE^r}
&\leq 2r(\delta\paulinorm{\mL})^2,\\
\tracenorm{\mN - e^{\mL t}}
&\leq \tau(\tracenorm{\mM-\mE^r}+r\tracenorm{\mE-e^{\mL \delta}}) \\
&\leq \tau(2r(\delta\paulinorm{\mL})^2+5r(\delta\paulinorm{\mL})^2) \\
&\leq 7\tau r(\delta\paulinorm{\mL})^2.
\end{aligned}
\end{equation}
Let $\tau=\sqrt{7}t\paulinorm{\mL} $ in Algorithm \ref{alg:lind-sim} and the error is bounded to $\epsilon$. The cost of Gadget \ref{fig:pauli-gadget} is $c_\textrm{pauli}\leq n+4$ and the cost of Gadget \ref{fig:jump-operator-gadget} is $c_\mathrm{jump}\leq 14(q\log q+ qn)$.
The total cost of Algorithm \ref{alg:lind-sim} is at most

\begin{equation}
\tau(3r\max\{c_\textrm{pauli},c_\mathrm{jump}\}+2(2r-1))
\leq 300 \cdot \frac{(t\paulinorm{\mL})^2}{\epsilon}\cdot q(\log q + n).
\end{equation}

For Algorithm \ref{alg:lind-sim-encoded}, we need the exact values of $h$ and $\tau$. Encoded selection preparation circuit $G$ costs at most $2(m+q)\log (m+q)$. Encoded control preparation circuit $V'_p$ costs $2(m+q)\log(m+q) + 2((2q+1)\log (m+q))=(2m+6q+1)\log(m+q)$. Finally, $V'_c$ costs at most $6mq(\log(m+q)+n)\leq 6mqn\log(m+q)$.
In total, $E_{\textrm{ro}}$ costs at most $7hmqn\log(m+q)\log r$ plus the cost of the original encoding operator $E$. The cost of the original encoding operator $E$ (as described in \cite[Section 4.4]{BCG14}) is at most $9h\log r$.
% ($2h\log r$ for the first encoding and $(\log r )(h+h+h+2h+0+h)+h\leq 7h\log r$ for the second encoding)
Therefore $E_\textrm{ro}$ costs at most $8hmqn\log(m+q)\log r$ for $mqn\geq 9$.
The total cost of Algorithm \ref{alg:lind-sim-encoded} is at most
\begin{equation}
\begin{aligned}
&\tau(5\#E_\text{ro}+3\cdot6hmqn\log(m+q)+2\cdot 2h\log r)\\
\leq & 60hmqn\tau \log(m+q)\log r,
\end{aligned}
\end{equation}
where $\#E_\text{ro}$ refers to the cost of the encoded rotation operator $E_\text{ro}$.

We estimate the value of $\tau$ and $h$ from the following argument:
The Hamming weight cutoff technique can be intuitively explained as using $\mC''=\sum_{k=0}^h{r\choose k}(\mL\delta)^k$ to approximate $\mC'=\sum_{k=0}^r {r\choose k}(\mL\delta)^k=(I+\mL\delta)^r$, which is an approximation for $\mC=e^{\mL r\delta}$ itself. We choose $r$ (thus $\tau$) and $h$ such that
\begin{equation}
    \diamondnorm{\mC''-\mC}\leq \epsilon/\tau.
\end{equation}
From \cite[Appendix B]{CW16}, $\diamondnorm{\mC'-\mC}\leq r(\delta\diamondnorm{\mL})^2$. Combined with
\begin{equation}
\begin{aligned}
\diamondnorm{\mC''-\mC'}&\leq \sum_{k=h+1}^r {r\choose k}(\delta\diamondnorm{\mL})^k\\
&\leq {r\choose h+1}(\delta\diamondnorm{\mL})^{h+1}\\
&\leq \frac{(r\delta\diamondnorm{\mL})^{h+1}}{(h+1)!},
\end{aligned}
\end{equation}
then
\begin{equation}
\diamondnorm{\mC''-\mC}\leq r(\delta\diamondnorm{\mL})^2+ {(r\delta\diamondnorm{\mL})^{h+1}}/(h+1)!.
\end{equation}
Let $r\delta\diamondnorm{\mL}=c$ be a constant, then
\begin{equation}
\diamondnorm{\mC''-\mC}\leq c^2\epsilon/\tau+ c^{h+1}/(h+1)!.
\end{equation}
Let $c=1/2$ and $h+1=\log(\tau/\epsilon)$, the right-hand side is bounded below $\epsilon/\tau$. Then $\tau=\frac{t\diamondnorm{\mL}}{c}= 2t\diamondnorm{\mL}$, and $h=\log(t\diamondnorm{\mL}/\epsilon)$.
The total cost is at most
\begin{equation}
\begin{aligned}
& 60hmqn\tau \log(m+q)\log r \\
\leq & 120 (\log(t\diamondnorm{\mL}/\epsilon)+1)^2 mqn(t\diamondnorm{\mL})\log(m+q) \\
\leq & 240 (\log(t\paulinorm{\mL}/\epsilon)+2)^2 mqn(t\paulinorm{\mL})\log(m+q)
\end{aligned}
\end{equation}

We estimate the cost of \cite{CW16} by replacing the circuit in Figure \ref{fig:fragment-gadget-0226} with the channel LCU method in \cite{CW16}. The cost is estimated to be
% \sirui{actually "a more careful construction that treats $A_0$ and $A_j$ separately" does exist. The cost should be $2h(mq^2+mq)n\log m$ for the $V'_c$ corresponding part. However the following presents a more coarse analysis showing $O(m^2)$-dependence.}
\begin{equation}
\begin{aligned}
&\tau(5\#E'_\text{ro}+3\cdot 2hm^2q^2n\log m+2\cdot 2h\log r)\\
\leq & 6 hm^2q^2n\tau \log m\log r,
\end{aligned}
\end{equation}
where $\#E'_\text{ro}$ refers to the cost of encoding operator in \cite{CW16}. In \cite{CW16}, controlled-$B$ gate costs $2mq^2\log m$ and $G$ gate costs $2\log m$, therefore $\#E'_\text{ro}=9h\log r+2mq^2 n\log m$.
By the same argument for $\tau$ and $h$, we estimate the cost of the algorithm in \cite{CW16} to be at most
\begin{equation}
\begin{aligned}
& 6 hm^2q^2n\tau \log m\log r \\
\leq & 12(\log(t\diamondnorm{\mL}/\epsilon)+1)^2 m^2 q^2 n (t\diamondnorm{\mL})\log m \\
\leq & 24(\log(t\paulinorm{\mL}/\epsilon)+2)^2 m^2 q^2 n (t\paulinorm{\mL})\log m.
\end{aligned}
\end{equation}

Now we can illustrate how gate counts of different Lindbladian simulation algorithms vary with time $t$. The systems are chosen to be dissipative quantum XY model defined in Equation \eqref{eqn:model-XY} and  "depolarized" Hamiltonian simulation defined in Equation \eqref{eqn:model-dhs}.

\begin{figure}[p]
    \centering
    \includegraphics[width=0.75\linewidth]{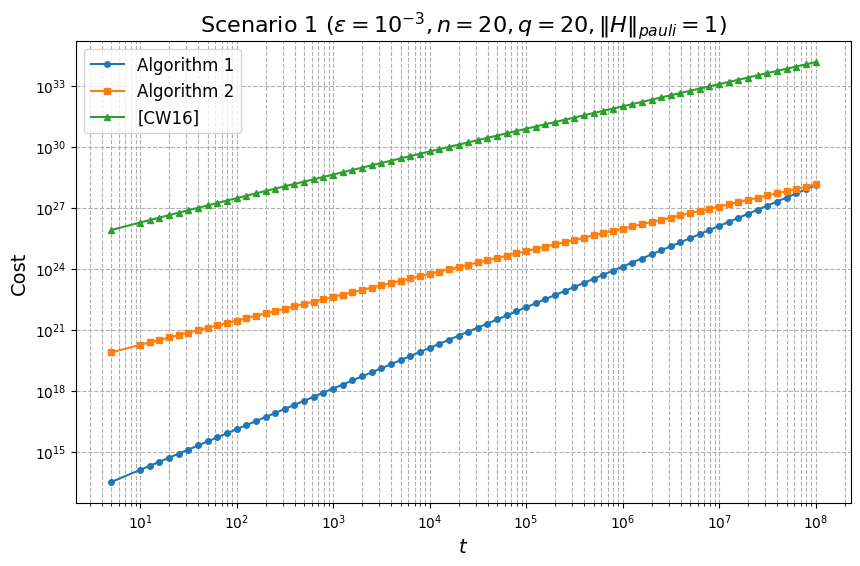}
    \caption{Gate count estimation for Algorithm \ref{alg:lind-sim}, Algorithm \ref{alg:lind-sim-encoded} and algorithm in \cite{CW16} in Scenario 1 varying with time $t$. We set the same precision as in the comparison in \cite{qDRIFT}.}
    \label{fig:comparison-scenario-1}
\end{figure}
\begin{figure}[p]
    \centering
    \includegraphics[width=0.75\linewidth]{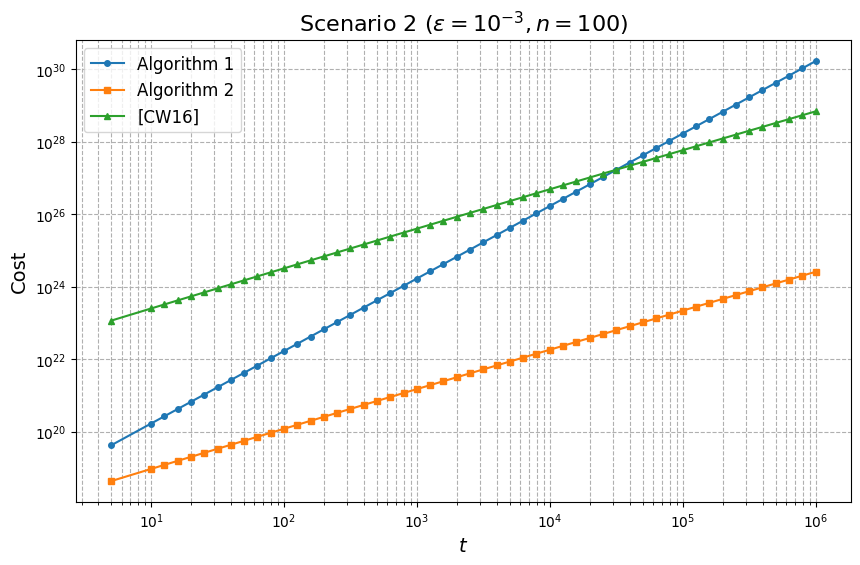}
    \caption{Gate count estimation for Algorithm \ref{alg:lind-sim}, Algorithm \ref{alg:lind-sim-encoded} and algorithm in \cite{CW16} in Scenario 2 varying with time $t$. We set the same precision as in the comparison in \cite{qDRIFT}.}
    \label{fig:comparison-scenario-2}
\end{figure}
\clearpage

\section{Numerical Simulation}
\label{sec:numerical-sim}

In this section, we present the results of numerical simulations for Algorithms \ref{alg:lind-sim} and \ref{alg:lind-sim-encoded} under two different scenarios and compare these with the exact solutions.

For Scenario 1, we analyze the Hamiltonian of the 1-dimensional Transverse-field Ising model. The Hamiltonian and jump operators are defined as follows:
\begin{equation}
\begin{aligned}
    H &= -J \sum_{i=0}^{n-2} Z_i Z_{i+1} - h \sum_{i=0}^{n-1} X_i, \\
    L_P &= P \in \{I,X,Y,Z\}^{\otimes n},
\end{aligned}
\end{equation}
where $J=-1$ and $h=0.5$. The initial state is $\ket{\uparrow\uparrow\uparrow\uparrow}=\ket{0000}$. For both algorithms, we set the precision $\epsilon=10^{-2}$ and the time span is $[0,2]$.

For Algorithm \ref{alg:lind-sim}, we conduct the simulation with $4\times 10^2$ samples. The mixture of these samples is presented as the final result. The parameters are:
\begin{equation}
\begin{aligned}
\tau &= 30 \leq \sqrt{7}t\paulinorm{\mL}, \\
r &= 3\times 10^3 = \tau/\epsilon.
\end{aligned}
\end{equation}
For Algorithm \ref{alg:lind-sim-encoded}, the parameters are:
\begin{equation}
\begin{aligned}
\tau &= 48 \leq 2 t\diamondnorm{\mL}, \\
r &= 4.8\times 10^3 = \tau/\epsilon, \\
h &= 10 \leq \log(\tau/\epsilon) - 1.
\end{aligned}
\end{equation}

In Scenario 2, the dynamic system is the dissipative quantum XY model with a $2\times 2$ grid topology. The Hamiltonian strength is $J=-1$ and jump operator strength $\gamma=0.1$. The Hamiltonian and jump operators are:
\begin{equation}
\begin{aligned}
    H &= \sum_{(i,j)\in E} (X_i X_j+ Y_i Y_j), \quad E=\{(0,1),(0,2),(1,3),(2,3)\}, \\
    L_j &= \sqrt{\gamma} Z_j, \quad \gamma=0.1, \quad j=0,1,2,3.
\end{aligned}
\end{equation}
The initial state is given by:
\begin{equation}
    \otimes_{j=0}^3\left(\cos\theta_j \ket{0}+\sin\theta_j \ket{1}\right),
\end{equation}
with $\{\theta_0,\ldots,\theta_3\}$ sampled uniformly from $[0,2\pi]$. The precision is set to $\epsilon=10^{-2}$ and the time span is $[0,15]$.

For Algorithm \ref{alg:lind-sim}, we simulate using $4\times 10^2$ samples. The parameters are:
\begin{equation}
\begin{aligned}
\tau &= 3\times 10^2 \leq \sqrt{7}t\paulinorm{\mL}, \\
r &= 3\times 10^4 = \tau/\epsilon.
\end{aligned}
\end{equation}
For Algorithm \ref{alg:lind-sim-encoded}, the parameters are:
\begin{equation}
\begin{aligned}
\tau &= 4.8\times 10^2 \leq 2 t\diamondnorm{\mL}, \\
r &= 4.8\times 10^4 = \tau/\epsilon, \\
h &= 14 \leq \log(\tau/\epsilon) - 1.
\end{aligned}
\end{equation}

These settings are aligned with the resource estimation provided in Appendix \ref{sec:resource-estimation} and aim to achieve a target precision of $\epsilon$. The simulation produces $\tau$ data points, derived from the simulated states $\tilde\rho_1,\ldots,\tilde\rho_{\tau-1},\tilde\rho_\tau$ at time points $\frac{1}{\tau}t,\ldots ,\frac{\tau-1}{\tau}t,t$.

For comparison, the exact solution $\rho_1,\ldots,\rho_\tau$ is computed using \texttt{scipy.integrate.solve\_ivp}, with \texttt{rtol=1e-5} and \texttt{atol=1e-6}.

The first two rows of Figure \ref{fig:numerical-sim-1} and \ref{fig:numerical-sim-2} compare our simulation results with the exact solution, confirming the correctness of both algorithms.
The third row presents the simulation error. For both algorithms, the error remains within the target precision (indicated by the horizontal line), consistent with expectations based on Appendix \ref{sec:resource-estimation}.
However, several noteworthy patterns emerge, which we briefly analyze below:
\begin{itemize}
    \item \textbf{Fluctuations in simulation errors for Algorithm \ref{alg:lind-sim}}: The simulation error arises from two sources: the bias $\|\mathbb{E}[\tilde\rho_j] - \rho_j\|_1$ and the variation $\mathbb{E}[\|\mathbb{E}[\tilde\rho_j] - \tilde\rho_j\|_1]$, where $\tilde\rho_j$ is the simulated state and $\rho_j$ the exact state. The variation explains the fluctuation.
    \item \textbf{Higher-than-expected precision}: The actual simulation error is significantly smaller than the target precision, indicating that the resource estimation may be overly conservative. This suggests that our parameter setting in Appendix \ref{sec:resource-estimation} can potentially yield higher precision than theoretically guaranteed.
    \item \textbf{Precision differences between the two algorithms}: Given the same target precision, Algorithm \ref{alg:lind-sim-encoded} achieves a notably higher precision compared to Algorithm \ref{alg:lind-sim}. This indicates that the resource bounds for Algorithm \ref{alg:lind-sim-encoded} may be further tightened. Additionally, the variation error in Algorithm \ref{alg:lind-sim} may partially explain the observed difference.
\end{itemize}

\begin{figure}[htbp]
    \centering
    \begin{subfigure}{0.4\textwidth}
       \centering
        \includegraphics[width=\linewidth]{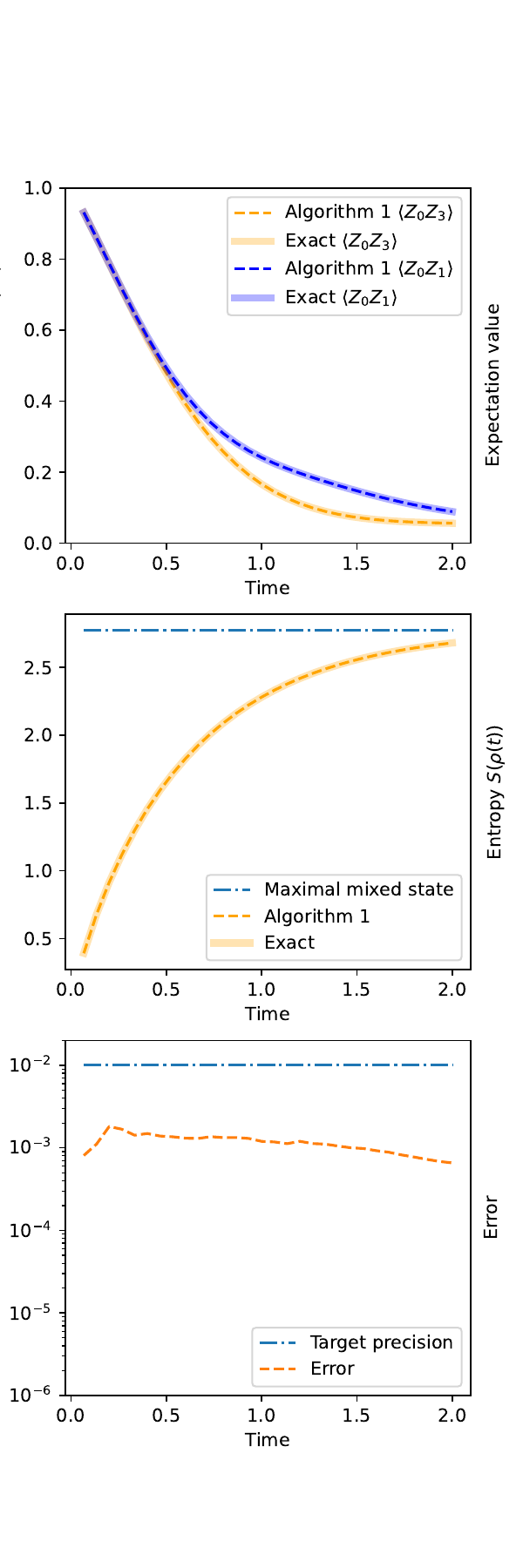}
    \end{subfigure}
    % \hfill
    \begin{subfigure}{0.4\textwidth}
       \centering
        \includegraphics[width=\linewidth]{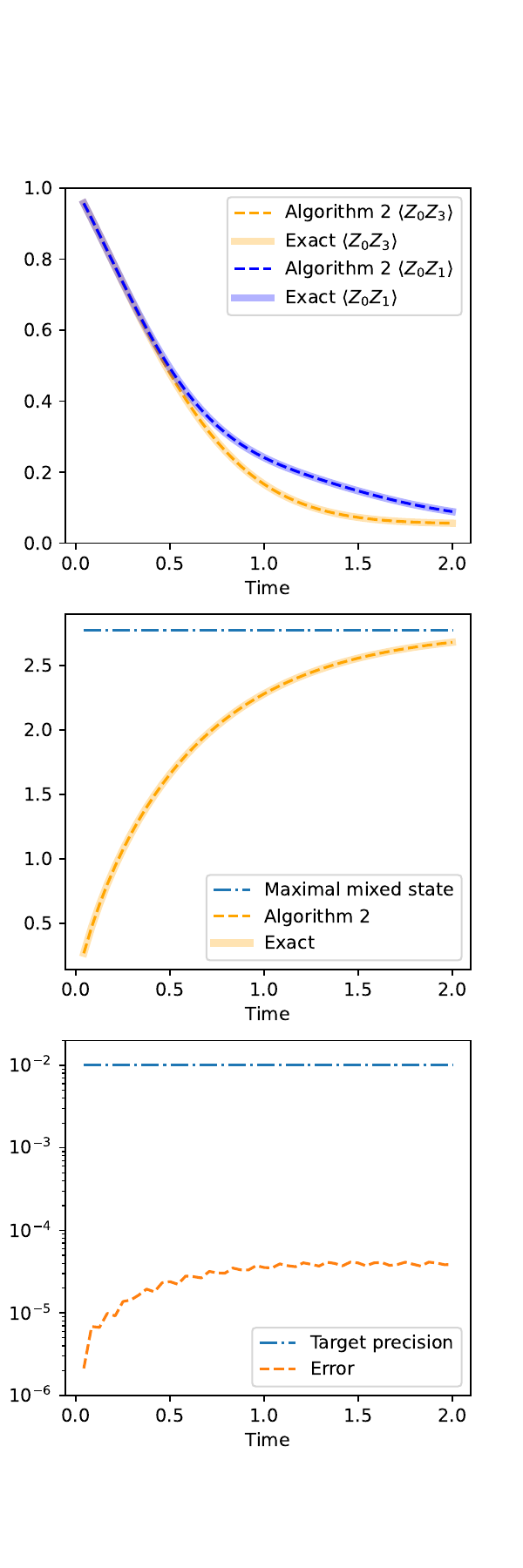}
    \end{subfigure}
    \caption{Simulation results for the depolarized Transverse-field Ising model by Algorithm 1 (left) and Algorithm 2 (right). The first row depicts the evolution of correlation functions $\braket{Z_0Z_1}$ and $\braket{Z_0Z_3}$. The second row depicts the evolution of von Neumann entropy. The third row depicts the error $\opnorm{\rho_t-\tilde\rho_t}_1$ between exact solution $\rho_t$ and our result $\tilde\rho_t$.}
    \label{fig:numerical-sim-1}
\end{figure}

\begin{figure}[htbp]
    \centering
    \begin{subfigure}{0.4\textwidth}
       \centering
        \includegraphics[width=\linewidth]{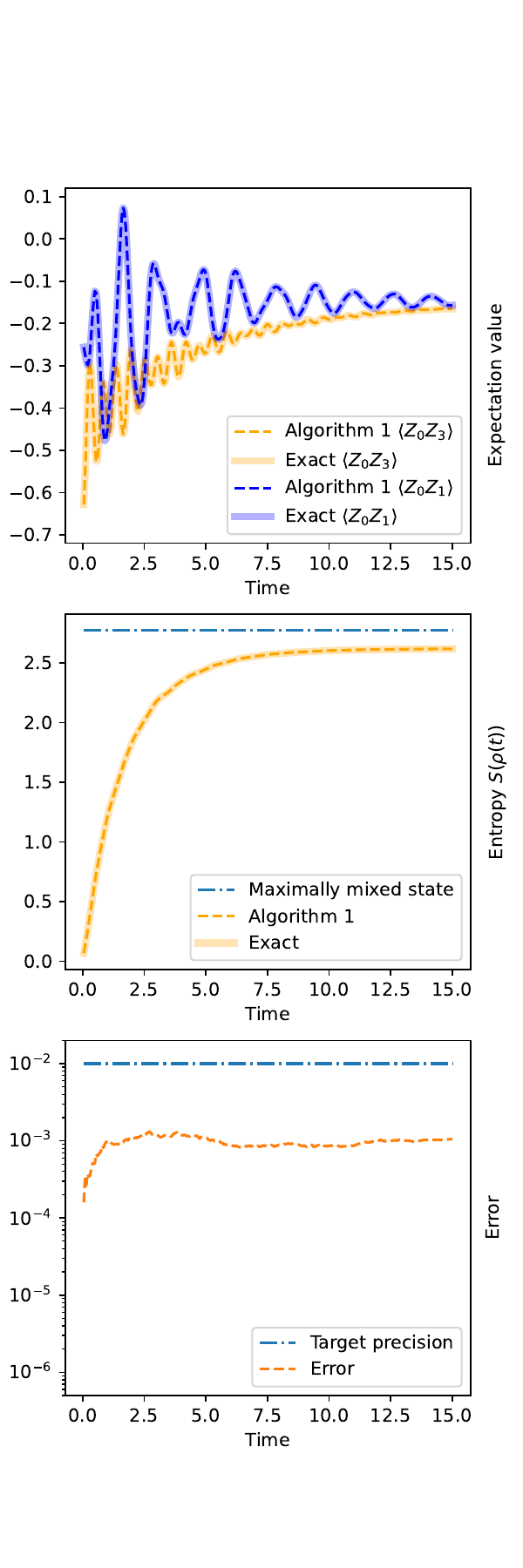}
        % \caption{Simulation results of Algorithm \ref{alg:lind-sim} in the time span $t\in[0,2]$.}
    \end{subfigure}
    % \hfill
    \begin{subfigure}{0.4\textwidth}
       \centering
        \includegraphics[width=\linewidth]{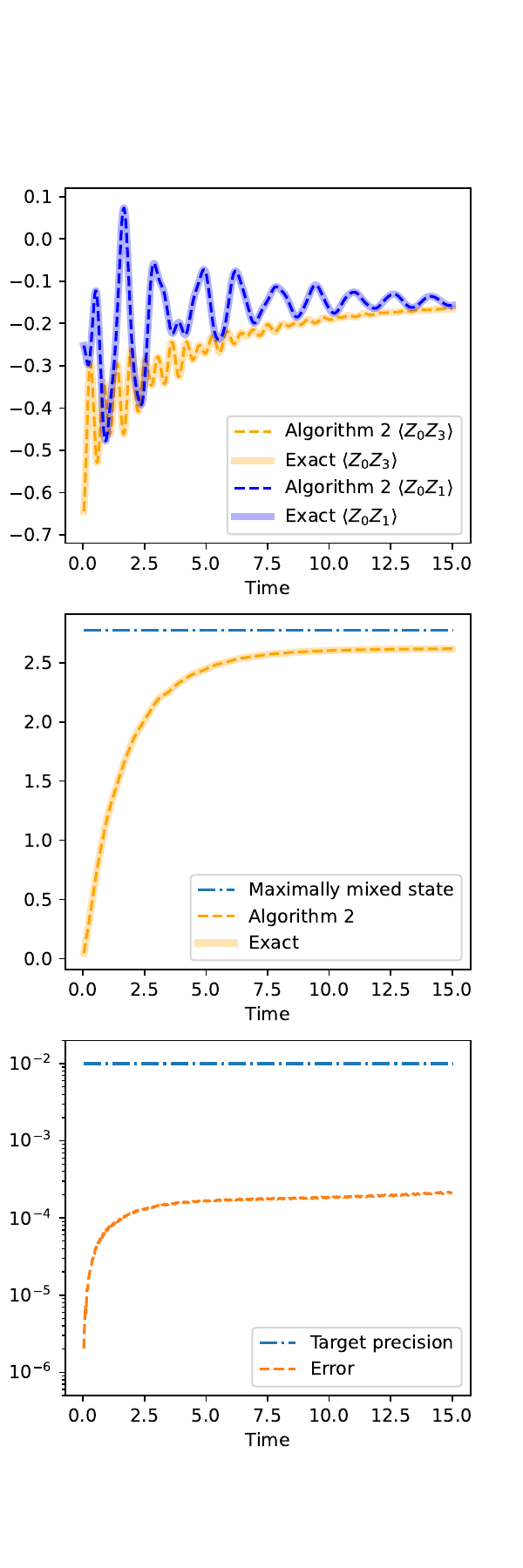}
        % \caption{Simulation results of Algorithm \ref{alg:lind-sim} in the time span $t\in[0,20]$.}
    \end{subfigure}
    \caption{Simulation results for the dissipative quantum XY model \cite{Liu2024SimulationOO} by Algorithm 1 (left) and Algorithm 2 (right). The first row depicts the evolution of correlation functions $\braket{Z_0Z_1}$ and $\braket{Z_0Z_3}$. The second row depicts the evolution of von Neumann entropy. The third row depicts the error $\opnorm{\rho_t-\tilde\rho_t}_1$ between exact solution $\rho_t$ and our result $\tilde\rho_t$.}
    \label{fig:numerical-sim-2}
\end{figure}

\end{document}